\newtheorem{theorem}{Theorem}
\newtheorem{proposition}[theorem]{Proposition}
\newtheorem*{claim}{Claim}
\newtheorem{lemma}[theorem]{Lemma}
\newcommand{\set}[1]{\{#1\}}
\newcommand{\p}{\operatorname{p}}
\newcommand{\dep}{\operatorname{d}}
\DeclareMathOperator\qn{qn}
\DeclareMathOperator\tn{tn}
\let\leq\leqslant
\let\geq\geqslant
\let\epsilon\varepsilon
\renewenvironment{enumerate}{\begin{enumorig}[label=\textup{(\roman*)}, noitemsep, topsep=2pt plus 2pt, labelindent=.2em, leftmargin=*, widest=iii]}{\end{enumorig}}
\let\old@setaddresses\@setaddresses
\def\@setaddresses{\bigskip\bgroup\parindent 0pt\let\scshape\relax\old@setaddresses\egroup}
\begin{document}
\title{On the queue-number of graphs with bounded tree-width}

\author[V.~Wiechert]{Veit Wiechert}
\address[V.~Wiechert]{Institut f\"ur Mathematik\\
  Technische Universit\"at Berlin\\
  Berlin \\
  Germany}

\email{wiechert@math.tu-berlin.de}

\thanks{V.\ Wiechert is supported by the Deutsche Forschungsgemeinschaft within the research training group `Methods for Discrete Structures' (GRK 1408).}

\date{\today}

% \subjclass[2010]{06A07, 05C35}
% 
% \keywords{Poset, dimension, cover graph, graph minor}

\begin{abstract}
 A \emph{queue layout} of a graph consists of a linear order on the vertices and an assignment of the edges to \emph{queues}, such that no two edges in a single queue are nested.
 The minimum number of queues needed in a queue layout of a graph is called its \emph{queue-number}.
 
 We show that for each $k\geq1$, graphs with tree-width at most $k$ have queue-number at most $2^k-1$.
 This improves upon double exponential upper bounds due to Dujmovi\'c et al.~and Giacomo et al.
 As a consequence we obtain that these graphs have track-number at most $2^{O(k^2)}$.
 
 We complement these results by a construction of $k$-trees that have queue-number at least $k+1$.
 Already in the case $k=2$ this is an improvement to existing results and solves a problem of Rengarajan and Veni Madhavan, namely, that the maximal queue-number of $2$-trees is equal to $3$.
\end{abstract}

\maketitle

\section{Introduction}
A \emph{queue layout} of a graph consists of a linear order on the vertices and an assignment of the edges to \emph{queues}, such that no two edges in a single queue are nested.
This is a dual concept to stack layouts, which are defined similarly, except that no two edges in a single stack may cross.
The minimum number of queues (stacks) needed in a queue layout (stack layout) of a graph is called its \emph{queue-number} (\emph{stack-number}).

The notion of queue-number was introduced by Heath and Rosenberg~\cite{HR92} in 1992.
Queue layouts, however, have been implicitly studied long before and have applications in fault-tolerant processing, sorting with parallel queues, matrix computations, and scheduling parallel processors (see~\cite{HLR92,HR92,Pemm92} for more details). 

In their seminal paper, Heath and Rosenberg characterize graphs admitting a $1$-queue layout as so-called \emph{arched leveled-planar} graphs and show that it is NP-hard to recognize them.
This is contrasting the situation for graphs with a $1$-stack layout, since these graphs are exactly the outerplanar graphs~\cite{BK79} and hence can be recognized in polynomial time.
Several other results relating these two types of layouts are studied in~\cite{HLR92}.
While planar graphs have stack-number at most $4$ \cite{Yann89}, it remains open whether the queue-number of planar graphs can be bounded by a constant.
This is one of the most tantalizing problems regarding queue layouts and it was conjectured to be true by Heath and colleagues~\cite{HLR92,HR92}.
In fact, they even conjecture that the queue-number can be bounded in terms of the stack-number; see~\cite{DW05} for a comprehensive study of this question.

There are some partial results towards a positive resolution of this conjecture.
Improving on an earlier result by Di Battista et al.~\cite{BFP13}, Dujmovi\'c showed that planar graphs have queue-number $O(\log n)$~\cite{Duj15}.
This result was extended to graphs with bounded Euler genus by Dujmovi\'c, Morin, and Wood~\cite{DMW-minor}.
In the more general case of graphs that exclude a fixed graph as a minor they obtained a $\log^{O(1)}n$ bound on the queue-number.

In this paper we focus on queue layouts of bounded tree-width graphs.
A comprehensive list of references to papers about further aspects of queue layouts can be found in~\cite{DW04}.

\subsection{Queue layouts and tree-width}
For several graph classes it is known that they have bounded queue-number.
For example, trees have a $1$-queue layout \cite{HR92}, outerplanar graphs a $2$-queue layout~\cite{HLR92}, partial $2$-trees (that is, series-parallel graphs) have a $3$-queue layout~\cite{RV95}, and graphs of path-width at most $p$ have a $p$-queue layout~\cite{Wood02}.

All these graphs have bounded tree-width and it was first asked by Ganley and Heath~\cite{GH01} whether there is a constant upper bound on the queue-number of bounded tree-width graphs (for the stack-number this is true as shown in~\cite{GH01}).
This question was answered in the affirmative for graphs that additionally have bounded maximum degree by Wood~\cite{Wood02}, and later in full by Dujmovi\'c and Wood~\cite{DW03} (see also \cite{DMW05}).
In the latter result, Dujmovi\'c and Wood establish the upper bound $3^k6^{(4^k-3k-1)/9}-1$ on the queue-number of graphs with tree-width at most $k$.
In fact, they provide upper bounds as solutions of a system of equations.
Giacomo et al.~\cite{giacomo} present an improved system of equations with smaller solutions for each $k\geq 1$ (without trying to find a nice expression for the corresponding upper bound), but still being double exponential in $k$.
Answering a question of Dujmovi\'c et al.~\cite{DMW05} we prove a single exponential upper bound.
\begin{theorem}\label{thm:main-upper}
 Let $k\geq 0$.
 For all graphs $G$ of tree-width at most $k$, we have that
 \[
  \qn(G)\leq 2^k-1,
  \]
 where $\qn(G)$ denotes the queue-number of $G$.
\end{theorem}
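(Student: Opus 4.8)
The plan is to reduce to $k$-trees and then induct on $k$, peeling off one colour class of a canonical vertex-colouring at a time. The recursion I aim for is $\qn(G)\le \qn(G')+2^{k-1}$ for a subgraph $G'$ of tree-width at most $k-1$, which telescopes to $2^{k-1}+2^{k-2}+\dots+2^{0}=2^{k}-1$ with base case $\qn=0$ for the edgeless ($k=0$) graph. First the reductions: queue-number is monotone under subgraphs and every graph of tree-width at most $k$ is a subgraph of a $k$-tree, so it suffices to treat a $k$-tree $G$, and we may assume $|V(G)|\ge k+2$ (otherwise $G\subseteq K_{k+1}$, whose queue-number is far below $2^{k}-1$). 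Fix a clique-tree decomposition $(T,\{B_x\})$ of $G$ rooted at a node $r$ in which every bag has exactly $k+1$ vertices and each non-root bag is obtained from its parent by forgetting one vertex and introducing one vertex. Propagating an arbitrary proper $(k+1)$-colouring of $B_r$ down $T$ (each introduced vertex inherits the colour of the vertex it replaces) gives a canonical proper colouring $\ell\colon V(G)\to\{1,\dots,k+1\}$ under which the $k+1$ vertices of every bag get the $k+1$ distinct colours.

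For the inductive step, delete the colour class $A=\ell^{-1}(k+1)$. Removing from each bag its unique colour-$(k+1)$ vertex leaves a valid tree-decomposition of $G-A$ with bags of size at most $k$, so $\tw(G-A)\le k-1$. I would strengthen the inductive statement so that it produces not just a layout of $G-A$ with at most $2^{k-1}-1$ queues, but one whose vertex ordering is \emph{compatible} with the inherited tree-decomposition (for instance, obtained from a BFS-layering of $T$, with a fixed tie-break within layers); this compatibility is what makes the reinsertion step controllable. Given such an ordering of $G-A$, extend it to $V(G)$ by inserting each $a\in A$ at a position dictated by its home node $h(a)$ in $T$ — deeper home nodes later, ties broken consistently with the tree order. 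Since the $a\in A$ are inserted as fresh elements, the relative order of the old vertices is unchanged, so no two edges of $G-A$ become nested and the old queues stay valid.

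It then remains to route the edges incident with $A$ using at most $2^{k-1}$ new queues. For such an edge $ab$ with $a\in A$, let $B_x$ be the highest bag containing both $a$ and $b$; besides $a$ (colour $k+1$) and $b$ (colour $\ell(b)$), the bag $B_x$ contains exactly the $k-1$ vertices realising the colours in $\{1,\dots,k\}\setminus\{\ell(b)\}$. Assign $ab$ to the queue indexed by the subset of those $k-1$ colours whose representative in $B_x$ precedes $a$ in the ordering; this uses $2^{k-1}$ queues. The heart of the argument is to check that two $A$-incident edges placed in the same queue are never nested: using that $A$ is independent, that for each $a$ the bags containing $a$ form a subtree along which the $k$-clique ``below $a$'' evolves, and that the ordering is compatible with the decomposition, a short case analysis on the relative positions of the four endpoints shows the common subset-label forces the two edges to cross or to be disjoint rather than nested.

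The main obstacle is precisely this verification together with pinning down the strengthened inductive hypothesis: one must (a) state the compatibility condition on the vertex ordering sharply enough that it is both preserved by the construction and strong enough to drive the no-nesting check, and (b) make the subset-indexing come out to exactly $2^{k-1}$ queues rather than a naive $2^{k}$, which forces the queue label to ignore $\ell(b)$ and hence to route edges with different ``lower'' colours through the same queue. Once the reinsertion is shown to cost at most $2^{k-1}$ queues, the recursion closes: $\qn(G)\le\qn(G-A)+2^{k-1}\le(2^{k-1}-1)+2^{k-1}=2^{k}-1$, and equivalently peeling the colour classes $k+1,k,\dots,1$ in turn contributes $2^{k-1}+2^{k-2}+\dots+2^{0}=2^{k}-1$ queues in total.
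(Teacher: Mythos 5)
Your overall skeleton (reduce to $k$-trees, peel structures one at a time, recursion $\qn(G)\le\qn(G')+2^{k-1}$ telescoping to $2^k-1$) is consistent with the numerology, and it is genuinely different from the paper's route: the paper recurses on the bags of a tree-partition of the $k$-tree (each bag induces a $(k-1)$-tree), giving $t_k=2t_{k-1}+1$, whereas you recurse by deleting one colour class $A$ of the canonical $(k+1)$-colouring, giving $t_k=t_{k-1}+2^{k-1}$. However, as written your proposal has a genuine gap, and it sits exactly where all the difficulty of the theorem lies: you never define the linear order (the ``compatibility'' condition and the precise positions at which the vertices of $A$ are inserted are left open), you never formulate the strengthened induction hypothesis, and you never carry out the verification that two $A$-incident edges with the same subset label cannot nest. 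You explicitly flag these as ``the main obstacle'', but they are not a routine check to be filled in later: they are the entire content of the proof. Moreover, your labelling is suspect as stated, because the ground set $\{1,\dots,k\}\setminus\{\ell(b)\}$ of the subset depends on $\ell(b)$, so compressing to $2^{k-1}$ queues forces you to identify subsets of different colour sets; two edges sharing a queue may then have labels with no semantic relation to each other, and no mechanism in your construction prevents them from nesting.

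For comparison, the paper needs $t_{k-1}+1=2^{k-1}$ queues for its cross (interbag) edges as well, but the correctness of that assignment rests on three concrete devices that have no counterpart in your sketch: the vertex order is a Lex-BFS-style order over the tree-partition in which sibling nodes are sorted by the rightmost vertex $c_x$ of their attachment clique $C_x$; the queue of a cross edge $uv$ with $u\neq c_x$ is inherited from the \emph{actual intrabag edge} $uc_x$ (not from an order-dependent set of colours); and the induction hypothesis is strengthened to the precise, checkable statement that edges sharing a right endpoint lie in distinct queues, which is what rules out the remaining nesting configurations. Until you (a) specify the order of $L$ restricted to $A$ relative to the old vertices, (b) state a compatibility invariant that your recursive construction provably maintains through all levels of the colour-class peeling (each level must be compatible with the decomposition inherited from the original $G$), and (c) prove the no-nesting claim for your subset-indexed queues, the step ``the $A$-incident edges cost at most $2^{k-1}$ queues'' is an unproved assertion, and the argument does not yet constitute a proof of the theorem.
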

Observe that this bound is not only asymptotically much smaller than previous best bounds, it is also strong for small values of $k$.
As special cases we obtain the above mentioned results that trees and partial $2$-trees have queue-number at most $1$ and $3$, respectively. (And as we will show, $3$ is best possible in the latter case). 
Interestingly, in his PhD thesis Pemmaraju~\cite{Pemm92} supports a conjecture of him that a certain family of planar $3$-trees (the \emph{stellated triangles}) has queue-number $\Omega(\log n)$.
Of course, this conjecture has already been disproved by Dujmovi\'c et al.~with their upper bound for $k$-trees.
However, now with Theorem~\ref{thm:main-upper} we even get that planar $3$-trees (and more generally partial $3$-trees) have queue-number at most $7$.

\subsection{Track layouts}
For the proofs of their upper bounds, Dujmovi\'c et al.~and Giacomo et al.~use track layouts of graphs.
A \emph{track layout} is a partition of the vertex set into \emph{tracks} together with a linear ordering on each track, such that no two tracks induce a crossing with respect to their orderings (see Section~\ref{sec:prel} for details).
The minimum number of tracks in a track layout of a graph $G$ is called the \emph{track-number} of $G$, and denoted by $\tn(G)$.

In \cite{DMW05,DW03} the upper bound $3^k6^{(4^k-3k-1)/9}$ is actually shown for the track-number of graphs of tree-width at most $k$.
Since the authors can also show that for all graphs $G$ it holds that $\qn(G)\leq \tn(G)-1$, they obtain their bound for the queue-number from the track-number bound.
In this paper we show the following result.
\begin{theorem}\label{thm:main-tn}
 Let $k\geq 0$.
 For all graphs $G$ of tree-width at most $k$, we have that
 \[
  \tn(G)\leq (k+1)(2^{k+1}-2)^k.
 \]
\end{theorem}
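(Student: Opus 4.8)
The plan is to obtain the bound from the queue layout constructed in the proof of Theorem~\ref{thm:main-upper}, rather than building a track layout from scratch. First I would reduce to $k$-trees: every graph of tree-width at most $k$ is a spanning subgraph of some $k$-tree, and deleting vertices or edges cannot increase the track-number, so it suffices to prove $\tn(G)\le(k+1)(2^{k+1}-2)^k$ when $G$ is a $k$-tree. Such a $G$ is properly $(k+1)$-colourable, and this colouring will account for the leading factor $k+1$.

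Next I would revisit the queue layout produced for Theorem~\ref{thm:main-upper} and isolate the extra structure it carries. That layout is built along a rooted tree-decomposition (equivalently, the generating tree of the $k$-tree) and recursively in the width parameter. Concretely, one should read off from the construction that each vertex receives a \emph{profile} consisting of $k$ coordinates --- one per level of the recursion that reduces a $k$-tree layout to a tree layout --- with each coordinate ranging over a set of size at most $2(2^k-1)=2^{k+1}-2$, and that the vertex order together with the $2^k-1$ queues is governed by these profiles and by a BFS-type layering $V_0,V_1,\dots$ in which every edge lies inside a single layer or between two consecutive layers.

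Then I would convert this into a track layout. Take the tracks to be indexed by pairs (colour class, profile vector), so at most $(k+1)(2^{k+1}-2)^k$ of them; each such set is independent because it lies inside one colour class. To verify the no-crossing condition between two tracks, order each track by the underlying vertex order of the queue layout, reading the layers $V_i$ in blocks according to $i \bmod 3$ so that within-layer and between-consecutive-layer edges are treated consistently. Then two edges joining the same ordered pair of tracks would cross only if the corresponding queue edges were nested; the profile bookkeeping is precisely what forces edges with nested spans into distinct queues, contradicting the queue layout. Hence the assignment is a genuine track layout and the count yields the claimed bound; as a sanity check, $k=0$ gives $1$ track and $k=1$ gives at most $4$.

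The main obstacle is the middle step: extracting from the proof of Theorem~\ref{thm:main-upper} the precise statement that each vertex can be described by a length-$k$ profile over an alphabet of size $2^{k+1}-2$ in a way that simultaneously controls the vertex order, the queue assignment, and the layering --- in other words, making the recursion behind the $2^k-1$ bound genuinely coordinate-wise, so that the product over the $k$ recursion levels contributes the factor $(2^{k+1}-2)^k$ rather than a tower. Checking that the cyclic treatment of the layers merges corresponding profile-tracks across different layers without creating crossings is a further point that requires care, but it should be routine once the layered structure from Theorem~\ref{thm:main-upper} is made explicit.
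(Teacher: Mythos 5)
Your overall plan diverges from the paper, which proves this theorem in two lines: it quotes Lemma~\ref{lem:acyclic} of Dujmovi\'c et al.\ (acyclic chromatic number at most $c$ and queue-number at most $q$ imply $\tn(G)\le c(2q)^{c-1}$), notes that tree-width at most $k$ gives acyclic chromatic number at most $k+1$, and plugs in $q=2^k-1$ from Theorem~\ref{thm:main-upper}. Your attempt to rebuild this bound directly from the queue layout of Theorem~\ref{thm:main-upper} has a genuine gap, and in fact two.

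First, the step you yourself flag as the ``main obstacle'' is precisely the content that is missing: nothing in the construction of Theorem~\ref{thm:main-upper} assigns to each vertex a length-$k$ profile over an alphabet of size $2^{k+1}-2$ that simultaneously governs the vertex order, the queue assignment, and a layering; that structure is never defined, and without it the count $(k+1)(2^{k+1}-2)^k$ has no source. Second, the verification step you sketch rests on a false implication. If each track is ordered by restricting the global order $L^G$, then two edges between the same pair of tracks form an X-crossing whenever they are nested \emph{or} crossing in $L^G$ (with the appropriate endpoints on the appropriate tracks); queues only forbid nesting, so two edges in the same queue can perfectly well produce an X-crossing. Hence ``same (colour, profile) pair and no nesting'' does not give a track layout, and a proper $(k+1)$-colouring alone cannot repair this: the known argument behind $c(2q)^{c-1}$ essentially needs the colouring to be \emph{acyclic}, so that each pair of colour classes induces a forest which can be recursively ``unwrapped''; the factor $(2q)^{c-1}$ arises from that refinement, not from a per-level queue alphabet. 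So either you reprove (a special case of) Lemma~\ref{lem:acyclic}, which is the real work and is absent from your sketch, or you invoke it as the paper does, in which case the whole profile machinery is unnecessary.
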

Clearly, this $2^{O(k^2)}$ bound is asymptotically a big improvement upon the double exponential bound discussed before.
However, for small values of $k$ (that is, $k\in\{1,2,3\}$) better bounds are known.
For example, Giacomo et al.~\cite{giacomo} show that graphs of tree-width at most $2$ admit a $15$-track layout.

\subsection{Three-dimensional drawings}
Another reason to study queue-number and track-number is their connection to three-dimensional drawings of graphs.
A \emph{three-dimensional straight-line grid drawing} is an embedding of the vertices onto distinct points of the grid $\mathbb{Z}^3$ with edges being represented as straight lines that connect their end-vertices, such that any two straight lines intersect only if they share a common end-vertex, and a vertex can only be contained in a straight line if this vertex is the end-vertex of that line.

Using the \emph{moment curve} one can show that each graph has such a drawing.
Therefore, we would like to minimize the volume of the bounding box defined by the grid points used for the embedding.
Cohen et al.~\cite{Cohen97} showed that the complete graph $K_n$ requires $\Theta(n^3)$ volume.
Graphs with bounded chromatic number can be drawn on the three-dimensional grid with $O(n^2)$ volume, as shown by Pach et al.~\cite{Pach99}, and this is best possible for complete bipartite graphs.
The latter result was improved by Bose et al.~\cite{Bose04}, who showed that graphs with $n$ vertices and $m$ edges need at least $\frac18 (n+m)$ volume.
In particular, this implies that graphs with three-dimensional drawings of linear volume have only linear many edges.
Dujmovi\'c and Wood~\cite{DW06} showed that graphs with bounded degeneracy admit drawings with $O(n^{3/2})$ volume.
A major open problem in this area is due to Felsner et al.~\cite{FLW03} and asks whether planar graphs can be drawn with linear volume.
The best known volume bound for this problem is $O(n\log n)$ and was given by Dujmovi\'c~\cite{Duj15} (see also~\cite{DMW-minor} for an extension of this bound to apex-minor free graphs, and an $n\log^{O(1)}n$ bound for proper minor-closed families).
In~\cite{DMW05} Dujmovi\'c et al.~argue that if planar graphs have bounded queue-number, then this would imply a linear bound on the required volume for three-dimensional drawings of planar graphs.

Let us focus on graphs of bounded tree-width now.
For outerplanar graphs, which have tree-width at most $2$, Felsner et al.~\cite{FLW03} proved a linear volume bound.
Their argument is based on track layouts and a technique called ``wrapping''.
Dujmovi\'c et al.~\cite{DMW05} showed that graphs of track-number at most $t$ have a $O(t)\times O(t)\times O(n)$ drawing, implying that bounded tree-width graphs can be drawn with linear volume.
To be more precise, using the bounds on the track-number obtained by Dujmovi\'c et al.~one can deduce that graphs of tree-width at most $k$ admit $O(t_k)\times O(t_k)\times O(n)$ drawings, where $t_k=3^k\cdot 6^{(4^k-3k-1)/9}$.
The resulting volume was slightly improved by Giacomo et al.~\cite{giacomo} with their new bounds on track-number.
Now, the aforementioned volume bound in terms of the track-number combined with our Theorem~\ref{thm:main-tn} significantly reduces the required volume for bounded tree-width graphs to $2^{O(k^2)}\times 2^{O(k^2)}\times O(n)$.

\subsection{Lower bounds on the queue-number}

Not much is known with respect to lower bounds on the queue-number of graphs.
Heath and Rosenberg~\cite{HR92} give a simple argument for the fact that the queue-number is always larger than half of the edge density.
In particular, graph families with more than linear many edges have unbounded queue-number.
Gregor et al.~\cite{GSV12} show that the $n$-dimensional hypercube has queue-number at least $(\frac12-\epsilon)n-O(1/\epsilon)$, for every $\epsilon>0$.

To the author's knowledge, the best published lower bound on the queue-number of planar graphs is $2$~\cite{RV95}.
In fact, the example given is an outerplanar graph and hence has tree-width at most $2$.
In the same paper it is conjectured that there are planar graphs with queue-number $5$ (without providing deep evidence for this).

For graphs of tree-width at most $k$ the situation is similar.
It is easy to see that complete graphs and complete bipartite graphs yield examples with queue-number at least $\lfloor\frac{k+1}{2}\rfloor$, but besides that no lower bounds depending on $k$ have been discussed.
(On the other hand, this is the case for track-number~\cite{giacomo,DMW05}.)
For the special case of $2$-trees we already mentioned that their queue-number is at most $3$.
The aforementioned example also shows that there are $2$-trees with queue-number $2$.
We close this gap and thereby answer a question of Rengarajan and Veni Madhavan~\cite{RV95} with the following general lower bound.

\begin{theorem}\label{thm:lower-bound}
 For each $k\geq 2$, there is a $k$-tree with queue-number at least $k+1$.
\end{theorem}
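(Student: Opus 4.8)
The first step is the standard reformulation: for a fixed vertex order $\sigma$ the minimum number of queues equals the size of the largest \emph{rainbow}, a set of pairwise nested edges (nesting is a strict partial order on the edges and queues are precisely its antichains, so this is Mirsky's theorem; it goes back to Heath and Rosenberg). Hence it suffices to build, for each $k\ge2$, a $k$-tree in which \emph{every} vertex order contains $k+1$ pairwise nested edges. Note that $K_{k+1}$, which is a $k$-tree, already forces the $\lfloor(k+1)/2\rfloor$ ``antipodal'' nested edges in every order; the job is to amplify this to $k+1$ by adding recursive structure while the graph stays a $k$-tree.

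I would use a construction recursive in a depth parameter $t$. Put $G_1=K_{k+1}$ and single out one of its $k$-cliques as the \emph{root}. Having built $G_{t-1}$ (with a root $k$-clique), build $G_t$ from a fresh $K_{k+1}$ with root $R$ and apex vertex $z$ by attaching, to each $k$-subclique of $R\cup\{z\}$ (or at least those containing $z$), a huge number $N=N(k,t)$ of pairwise disjoint copies of $G_{t-1}$, each copy's root identified with that subclique. Every $G_t$ is a finite $k$-tree, being obtained only by repeatedly adding a vertex onto a $k$-clique; the desired $k$-tree is $G_t$ for $t$ large enough that the bound below reaches $k+1$. For $k=2$ this is the $2$-tree answering the question of Rengarajan and Veni Madhavan.

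The core is an induction on $t$ showing that every order of $G_t$ has a rainbow of size $\ge\lfloor(k+1)/2\rfloor+(t-1)$, together with a \emph{strengthened} version recording where that rainbow sits relative to the root clique, so the statement can be used on the glued copies. For the step, fix an order $\sigma$. The spine $K_{k+1}=R\cup\{z\}$ yields a base rainbow of $\lfloor(k+1)/2\rfloor$ antipodal edges; its $k+1$ vertices cut the line into $k+2$ gaps, so by pigeonhole among the $N$ copies glued onto one suitably chosen $k$-subclique there is a gap holding the apexes of at least $N/(k+2)$ of those copies. For such a copy the induction hypothesis, applied with that copy's root (a subclique of the spine), gives a rainbow of size $\lfloor(k+1)/2\rfloor+(t-2)$, and the positional strengthening lets it be \emph{spliced} into the base rainbow---strictly inside the innermost base edge, or strictly between two consecutive base edges---adding at least one more nested edge. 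Taking $t$ with $\lfloor(k+1)/2\rfloor+(t-1)\ge k+1$ proves the theorem.

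The hard part, and where the invariant has to be chosen with care, is the splice. Nested edges must have pairwise distinct, strictly interleaved endpoints, but a glued copy shares its entire root $k$-clique with the spine, so the obvious attempt to nest the copy's rainbow below a spine edge collides at those shared vertices. One must therefore choose which $k$-subclique of $R\cup\{z\}$ to attach onto, and which gap the apex occupies, so that the copy's non-root vertices land strictly inside a sub-interval of a base edge that avoids every base-rainbow endpoint; and one must formulate the induction hypothesis so that each recursively produced rainbow carries exactly the positional guarantee its parent needs, then verify that this guarantee is re-established after the splice, in each of the $O(k)$ cases for where the apex can fall. Making all these cases close simultaneously, with $N$ large but finite at every level, is the real work; the only relief is that we merely need rainbow size $k+1$, so a gain of one nested edge per level over $\Theta(k)$ levels is plenty and keeps the construction finite.
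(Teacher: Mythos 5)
Your construction is in the right spirit (the paper also iterates stacking new vertices onto $k$-cliques of $K_{k+1}$), but the proof has a genuine gap exactly where you say the ``real work'' is, and I do not believe the argument closes as sketched. The pigeonhole step only controls where the \emph{apexes} of the glued copies land; it says nothing about the remaining vertices of any copy. An adversarial order may place \emph{all} non-root vertices of \emph{every} copy outside the gaps of the base rainbow (for instance, entirely to the left of all spine vertices). Then the rainbow produced by the induction hypothesis inside a chosen copy need not be nested with a single base edge: its edges either lie wholly outside the spine's span (hence are incomparable with the antipodal spine edges) or use shared root vertices and straddle base endpoints. So the splice, as described, fails, and no choice of $N$ helps, because increasing the number of copies does not constrain where a copy's interior is placed. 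The missing ingredient is the ``positional strengthening'' itself: you never state an invariant, and the natural candidates are either false at the base case ($G_1=K_{k+1}$ has essentially no non-root vertices to build a root-avoiding rainbow from) or cannot be forced against an adversary who controls the placement of all interior vertices. Since the entire inductive gain of one nested edge per level rests on this unstated and unverified invariant, the theorem is not proved.

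For comparison, the paper circumvents precisely this difficulty by making the argument adaptive: it defines a $k$-queue game in which Alice repeatedly stacks a new vertex on a $k$-clique of her choice \emph{after seeing} where Bob placed the previous vertices, and proves (Lemmas~\ref{lemma:stack-outside}--\ref{lemma:winning-conf}, combined in Lemma~\ref{lem:game-bound}) that Alice can force a $(k+1)$-rainbow within a bounded number $d_k$ of rounds. Adaptivity removes the need for any positional invariant holding under all placements simultaneously: Alice reacts to Bob's choices (forcing him outside cliques, forcing vertices to the right, and finally assembling the winning configuration of Lemma~\ref{lemma:winning-conf}). The explicit graph is then the $d_k$-fold iterated $k$-stack of $K_{k+1}$, which contains all possible plays of the game; a vertex order of it with no $(k+1)$-rainbow would hand Bob a strategy surviving $d_k$ rounds, a contradiction. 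If you want to salvage a direct, non-game induction along your lines, you would have to build into the inductive statement a case analysis over all qualitative placements of a copy's interior relative to the already-fixed rainbow (not just over the apex positions), which is essentially a de-adaptified version of the paper's strategy and is considerably more delicate than the sketch suggests.
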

Since $2$-trees are planar, we particularly obtain that there are planar graphs with queue-number at least $3$.

\subsection{Proof ideas and organization}
For the proof of Theorem~\ref{thm:main-upper} we make use of tree-partitions, which were introduced by Seese~\cite{Seese85} and independently by Halin~\cite{Halin91}.
A \emph{tree-partition} of a graph is a partition of its vertex set into ``bags'', combined with an underlying tree (or forest) on the bags so that each edge of the graph is either contained within a bag, or it goes along an edge of the tree.
The fact that $k$-trees admit tree-partitions such that each bag induces a $(k-1)$-tree (see~\cite{DMW05}) allows us to apply induction.
In contrast to the proofs of \cite{giacomo,DMW05}, we do not construct a track layout as an intermediate step, but directly build a queue layout of the given graph.

The rest of the paper is organized as follows.
In Section~\ref{sec:prel} we provide necessary definitions and basic propositions for our proofs.
In Section~\ref{sec:upper-bound} we prove Theorem~\ref{thm:main-upper} and \ref{thm:main-tn}.
Then we show the lower bound of Theorem~\ref{thm:lower-bound} in Section~\ref{sec:lower-bound}.
We conclude the paper with some open problems in Section~\ref{sec:problems}.

\section{Preliminaries}\label{sec:prel}
In this section we introduce the necessary definitions and basic concepts for our main result.

\subsection{Queue and Track layouts}
Let $G=(V,E)$ be a graph and let $L$ be a linear order on the vertices of $G$.
We say that edges $uv,u'v'\in E$ are \emph{nested} with respect to $L$ if $u<u'<v'<v$ or $u'<u<v<v'$ in $L$.
A set $Q$ of edges in $G$ forms a \emph{queue} with respect to $L$ if no two edges of $Q$ are nested in $L$.
A \emph{queue layout} of $G$ is a linear order $L$ on the vertices of $G$ together with a partition of the edge set of $G$ into queues with respect to $L$.
The minimum number of queues in a queue layout of $G$ is called the \emph{queue-number} of $G$, and denoted by $\qn(G)$.

There is a different access to the queue-number via $k$-rainbows.
Given a linear order $L$ on the vertices of a graph $G$, we say that the edges $a_1b_1,\ldots,a_kb_k$ form a \emph{rainbow} of size $k$ (or \emph{$k$-rainbow}) if
\[
 a_1<\cdots <a_k<b_k<\cdots <b_1
\]
in $L$.
Clearly, if $k$ is the maximum size of a rainbow in $L$, then each queue layout using $L$ as the linear order will consist of at least $k$ queues.
It is not hard to see that $k$ queues suffice in this case.
\begin{proposition}[\cite{HR92}]
 If $G$ has no rainbow of size $k+1$ with respect to a given linear order $L$, then $G$ has a queue layout using at most $k$ queues with respect to $L$.
\end{proposition}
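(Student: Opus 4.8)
The plan is to prove the contrapositive in spirit: given a linear order $L$ with no $(k+1)$-rainbow, greedily assign the edges to $k$ queues. First I would observe that the "no two nested edges in a queue" condition is exactly the condition that the set of edges in each queue, viewed as intervals on the line $L$, forms a family with no strict containment — equivalently, an antichain in the partial order where $uv \preceq u'v'$ iff the interval $[u,v]$ is nested inside $[u',v']$. So the question becomes: if the longest chain in this interval-containment order on $E(G)$ has length at most $k$, can we partition $E(G)$ into $k$ antichains? This is precisely Mirsky's theorem (the dual of Dilworth's theorem) applied to the poset $(E(G), \preceq)$.

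So the key steps would be: (1) define the partial order $\preceq$ on edges of $G$ by nesting; (2) note that a chain of length $k+1$ in this order is exactly a rainbow of size $k+1$ in $L$ (edges $a_1b_1 \prec \cdots \prec a_{k+1}b_{k+1}$ means $a_1 < \cdots < a_{k+1} < b_{k+1} < \cdots < b_1$), so by hypothesis every chain has size at most $k$; (3) apply Mirsky's theorem to partition $E(G)$ into at most $k$ antichains; (4) observe that each antichain, consisting of pairwise non-nested edges, is by definition a queue with respect to $L$. Concretely, the partition can be made explicit: assign edge $e$ to queue $Q_i$ where $i$ is the length of the longest chain in $\preceq$ with top element $e$; since the longest such chain has $\le k$ elements, $i \in \{1,\ldots,k\}$, and if $e \prec e'$ then the queue index of $e$ is strictly smaller than that of $e'$, so no two edges in the same queue are comparable, i.e. nested.

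There is essentially no serious obstacle here; the only thing to be slightly careful about is the bookkeeping identifying "$k$-rainbow" with "chain of length $k$" and confirming that two edges are incomparable in $\preceq$ precisely when they are not nested (they may still share endpoints or cross, but that is allowed in a queue). Once that dictionary is in place, the proposition is an immediate consequence of Mirsky's theorem, and the explicit level-function argument above avoids even citing it. I would present the level-function version for self-containedness, since it is only a couple of lines.
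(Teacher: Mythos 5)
Your proof is correct, and it is essentially the intended (and standard) argument: the paper itself does not prove this proposition but cites Heath and Rosenberg with the remark that it is ``not hard to see,'' and the classical proof is exactly your level-function/Mirsky partition of the nesting order into antichains, each of which is a queue. The one bookkeeping point you flag -- that under the paper's strict definition of nesting, edges sharing an endpoint or crossing are incomparable and hence allowed in a common queue -- is handled correctly.
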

As a consequence, the queue-number can be described as the minimum number taken over the maximal size of a rainbow in a linear order of $V(G)$.

We define \emph{track layouts} now.
Let $G$ be graph and let $\{V_i: i=1,\ldots,\ell\}$ be a partition of $V(G)$ into independent set.
A set $V_i$ combined with a linear order $<_i$ on its elements is a \emph{track} of $G$.
Then a set of tracks $\{(V_i,<_i):i=1,\ldots,\ell\}$ is called a \emph{track assignment} of $G$.
Two edges $ab$ and $cd$ form an \emph{X-crossing} in a track assignment $\{(V_i,<_i):i=1,\ldots,\ell\}$ if there are $i,j\in\{1,\ldots,\ell\}$ such that $a<_ic$ and $d<_jb$.
A track assignment without an X-crossing is called a \emph{track layout}.
The minimum number of tracks in a track layout of $G$ is the \emph{track-number} of $G$, which we denote by $\tn(G)$.

\subsection{Tree-width}
Let $G=(V,E)$ be a graph.
A \emph{tree-decomposition} of $G$ is a pair $(T,\{T_x\}_{x\in V})$ consisting of a tree $T$ and a family of non-empty subtrees of $T$, such that $V(T_x)\cap V(T_y)\neq \emptyset$ for each edge $xy\in E$.
The vertices of $T$ are called \emph{nodes}, and each node $u\in V(T)$ induces a \emph{bag} $\{x\in V: u\in T_x\}$.
The maximum size of a bag minus one is the \emph{width} of the tree-decomposition.
Then the \emph{tree-width} of $G$ can be defined as the minimum width of a tree-decomposition of $G$.

For our purposes it is convenient to follow the work of Dujmovic et al.~\cite{DMW05} and define $k$-trees as introduced by Reed~\cite{Reed03}.
Given some fixed integer $k\geq 0$, a \emph{$k$-tree} is defined recursively.
The empty graph is a $k$-tree, and each graph obtained by adding a vertex $v$ to a $k$-tree so that the adjacent vertices of $v$ form a clique of size at most $k$ is also a $k$-tree.
(Arnborg and Proskurowski~\cite{AP89} introduced $k$-trees in a slightly more restrictive way.
They start with defining a $k$-clique to be a $k$-tree, and each graph obtained from a $k$-tree by adding a vertex being adjacent to a $k$-clique is also a $k$-tree.
Sometimes the notion of \emph{strict} $k$-trees is used for this more restrictive version.)
A subgraph of a $k$-tree is called a \emph{partial $k$-tree}.
It is well-known that a graph has tree-width at most $k$ if and only if it is a partial $k$-tree.
Moreover, $k$-trees are \emph{chordal} graphs, that is, they do not contain a cycle on more than three vertices as an induced subgraph.

\subsection{Tree-partitions}
For the construction of a queue layout in our main proof, we do not use a specific tree-decomposition, but instead we use a tree-partition. 
Given a graph $G$, a \emph{tree-partition} of $G$ is a pair consisting of a tree $T$ (or forest) and a partition of $V(G)$ into sets $\{T_x: x\in V(T)\}$ being indexed by the vertices of $T$, such that for each edge $uv$ in $G$ we either have that $u,v\in T_x$ for some $x\in V(T)$, or there is an edge $xy$ of $T$ with $u\in T_x$ and $v\in T_y$.
We refer to the vertices of $T$ as \emph{nodes}, and say that $T_x$ ($x\in V(T)$) is a \emph{bag} of the tree-partition.
By $G[T_x]$ we denote the subgraph of $G$ induced by the vertices of $T_x$. 
For an example of a tree-partition see Figure~\ref{fig:tree-partition}.
A fixed tree-partition of $G$ naturally divides the edges of $G$ into two classes.
If both endpoints of an edge are contained in the same bag, then we call it an \emph{intrabag edge}.
In the other case, so if the two endpoints lie in different bags, then we call it an \emph{interbag edge}.

\begin{figure}[t]
 \centering
 \includegraphics[scale=1.2]{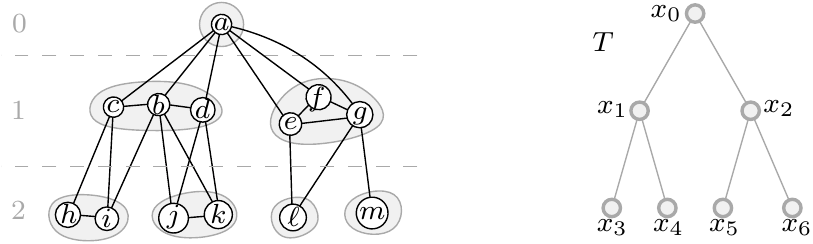}
 \caption{A tree-partition of a $3$-tree whose vertices are labeled with $a,b,\ldots,m$. The figure on the right shows the underlying tree of the tree-partition.}
 \label{fig:tree-partition}
\end{figure}

\section{Upper bounds -- Proofs of Theorem~\ref{thm:main-upper} and \ref{thm:main-tn}}\label{sec:upper-bound}
We begin this section with a proof of Theorem~\ref{thm:main-upper} and conclude it with a short proof of Theorem~\ref{thm:main-tn}.

We would like to note that it is enough to prove Theorem~\ref{thm:main-upper} for $k$-trees.
Indeed, this follows from the two facts that each graph of tree-width $k$ can be extended to a $k$-tree by adding edges to the graph (for example, by taking the \emph{chordal completion} that minimizes the size of the maximum clique), and that the queue-number of a graph does not decrease under the addition of edges.

As noted before, our queue layout construction relies on tree-partitions that capture the structure of $k$-trees.
The following theorem by Dujmovi\'c et al.~will give us such a tree-partition. 
\begin{theorem}[\cite{DMW05}]\label{thm:tree-partition}
 Let $G$ be a $k$-tree.
 Then there is a rooted tree-partition $(T,\{T_x\colon x\in V(T)\})$ of $G$ such that
 \begin{enumerate}
  \item for each node $x$ of $T$, the induced subgraph $G[T_x]$ is a connected $(k-1)$-tree,
  \item\label{item:clique} for each nonroot node $x\in T$, if $y\in T$ is the parent node of $x$ in $T$ then the vertices in $T_y$ with a neighbor in $T_x$ form a clique.
 \end{enumerate}

\end{theorem}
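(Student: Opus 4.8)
The plan is to prove the statement by induction on $|V(G)|$: remove the vertex added last in the recursive construction of $G$, recurse, and reinsert it into the tree-partition of the smaller $k$-tree.

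A disconnected $k$-tree is a disjoint union of connected $k$-trees (the recursive process never merges two components), so it suffices to treat connected $G$ and then take the disjoint union of the resulting tree-partitions, which is an admissible rooted forest. Assume then that $G$ is connected with $|V(G)| \ge 2$, and let $v$ be the vertex added last according to the recursive definition. Then $C := N_G(v)$ is a clique with $1 \le |C| \le k$, the graph $G-v$ is again a $k$-tree, and $v$ is simplicial, hence not a cut vertex, so $G-v$ is connected. Applying the induction hypothesis to $G-v$ yields a rooted tree-partition $(T', \{T'_x\})$ satisfying (i) and (ii); it remains to place $v$.

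The key observation is that in any tree-partition a clique lies in the union of at most two bags whose nodes coincide or are adjacent in the tree: if $C$ meets both $T'_x$ and $T'_y$ with $x \ne y$, then an edge of $C$ between these bags forces $xy \in E(T')$, so the set of nodes whose bag meets $C$ is a clique in the tree $T'$ and thus has size at most $2$. Using this I would reinsert $v$ as follows. If $C$ lies in a single bag $T'_x$ and $|C| \le k-1$, add $v$ to $T'_x$: then $G[T'_x]+v$ is a connected $(k-1)$-tree by the recursive definition, $v$ has no neighbour outside $T'_x$, so the partition stays valid and (ii) is unaffected at $x$. If $C \subseteq T'_x$ but $|C| = k$, then $T'_x$ cannot absorb $v$ since a $(k-1)$-tree has no $(k+1)$-clique, so I create a new leaf bag $\{v\}$ as a child of $x$; every edge at $v$ crosses the new tree edge, $G[\{v\}]$ is a trivial $(k-1)$-tree, and (ii) for the new edge holds because the vertices of $T'_x$ adjacent to $v$ are exactly the clique $C$. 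Finally, if $C$ meets two adjacent bags $T'_x$ and $T'_y$, relabel so that $x$ is the child of $y$, and add $v$ to the child bag $T'_x$: writing $C_x = C \cap T'_x$ and $C_y = C \cap T'_y$, both are nonempty so $|C_x| \le |C|-1 \le k-1$ and $G[T'_x]+v$ is a connected $(k-1)$-tree; the edges from $v$ to $C_y$ run along the tree edge $xy$, so the partition is still valid; and (ii) for $xy$ is preserved because the set of vertices of $T'_y$ with a neighbour in $T'_x \cup \{v\}$ is the old clique $K$ together with $C_y$, and $C_y \subseteq K$ since each vertex of $C_y$ is adjacent to all of the nonempty set $C_x \subseteq T'_x$. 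In every case the remaining instances of (ii) at $x$ and its tree-neighbours are immediate, since all neighbours of $v$ lie in $C \subseteq T'_x \cup T'_y$ and hence in no other bag.

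The step I expect to be the main obstacle is the last case: one must notice that $C$ can genuinely be split between two bags, that inserting $v$ into the parent bag would violate (ii) — the clique guaranteed there need not contain $v$ — and that inserting $v$ into the child bag works precisely because that clique already contains $C_y$. The base cases $|V(G)| \le 1$, the treatment of disconnected graphs, and the degenerate value $k = 0$ are routine.
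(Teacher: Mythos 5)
Your proof is correct, but it takes a genuinely different route from the one the paper (following Dujmovi\'c, Morin, and Wood) sketches. The paper obtains the tree-partition globally: perform a BFS from a root vertex $r$, take as bags the connected components of the distance layers, join two nodes when an edge of $G$ runs between their bags, and use chordality of the $k$-tree to show that the resulting node graph is a tree and that each bag induces a connected $(k-1)$-tree; the rooted structure (and the extra feature that the root bag is a single vertex) comes for free from the BFS. You instead induct on the construction order of the $k$-tree, peeling off the last simplicial vertex $v$ and reinserting it into the inductively obtained partition. Your key observation that the clique $N(v)$ meets at most two bags, necessarily adjacent in the tree, is right, and your case analysis is sound: absorb $v$ into the unique bag when $|N(v)|\le k-1$; hang a new leaf bag $\{v\}$ when $N(v)$ is a $k$-clique inside one bag; and when $N(v)$ is split between a parent bag and a child bag, insert $v$ into the \emph{child} bag, where condition~(ii) survives because $C_y\subseteq K$ -- and you correctly identified that inserting into the parent bag could destroy~(ii). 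Your argument is more elementary and self-contained (no chordality argument is needed to see that the node graph is a tree, since the tree is built incrementally), at the cost of not producing the additional structure of the BFS construction, such as the singleton root bag mentioned parenthetically in the paper; none of that is required for the theorem as stated or for its use in the proof of the main result.
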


Let us give a brief sketch of how one can obtain a tree-partition of a connected $k$-tree $G$ as in the theorem.
Fix an arbitrary vertex $r$ of $G$ and perform a \emph{Breadth-first Search} (BFS) in $G$ starting from $r$.
For each $d\geq 0$ and each component induced by the vertices at distance $d$ from $r$, we introduce a node  and associate with this node a bag containing the vertices of the component.
Two nodes become adjacent if their corresponding sets of vertices are joined by at least one edge of $G$. 
Using the chordality of $G$ one can show that the constructed graph $T$ on the nodes is indeed a tree, and that the vertices of each bag induce a $(k-1)$-tree.
Note that the bag of the root node of $T$ contains only one vertex ($r$ in our case).
The tree-partition in Figure~\ref{fig:tree-partition} can be obtained with the described procedure by starting the BFS from vertex $a$. 

Let $(T,\{T_x\colon x\in V(T)\})$ be a rooted tree-partition as in the previous theorem.
For each nonroot node $x$ of $T$, we denote by $\p(x)$ the parent node of $x$ in $T$.
Moreover, we let $C_x$ denote the clique in $T_{\p(x)}$ according to item~\ref{item:clique} of Theorem~\ref{thm:tree-partition}.
For instance, in our example of Figure~\ref{fig:tree-partition} we have that $\p(x_3)=x_1$ and the vertices of $C_{x_3}$ are $b$ and $c$.

We are now ready to prove Theorem~\ref{thm:main-upper}.
In fact, we show the following slightly stronger result.
\begin{theorem}\label{thm:main}
 Let $k\geq 0$.
 For each $k$-tree $G$, there is a queue layout using at most $t_k=2^k-1$ queues, such that for each $v\in V(G)$, edges with $v$ as their right endpoint in the layout are assigned to pairwise different queues.
\end{theorem}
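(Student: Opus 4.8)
The plan is to prove the statement by induction on $k$. For $k=0$ a $k$-tree has no edges, so the empty layout works and $t_0 = 0 = 2^0 - 1$. For the inductive step, assume the claim holds for $(k-1)$-trees with bound $t_{k-1} = 2^{k-1}-1$, and let $G$ be a $k$-tree. We may assume $G$ is connected (otherwise lay out the components one after another, re-using queues across components since no edge joins two components). Apply Theorem~\ref{thm:tree-partition} to obtain a rooted tree-partition $(T,\{T_x\})$ in which each bag $G[T_x]$ is a connected $(k-1)$-tree and, for each nonroot $x$, the set $C_x \subseteq T_{\p(x)}$ of vertices with a neighbor in $T_x$ is a clique. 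By the inductive hypothesis, for every node $x$ we fix a queue layout $L_x$ of $G[T_x]$ using at most $t_{k-1}$ queues with the extra property that edges sharing a right endpoint lie in distinct queues; call this a \emph{good} layout.

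The global linear order $L$ is obtained by first fixing a linear order on the nodes of $T$ that is compatible with a BFS (or DFS) traversal — concretely, I would order the nodes so that all of $T_x$ precedes all of $T_y$ whenever $x$ is a strict ancestor of $y$, and so that for siblings the blocks of descendants appear consecutively and in some fixed order — and then, within the block of each node $x$, use the order $L_x$. The key design decision for handling interbag edges is to reorder the vertices inside each bag $T_{\p(x)}$ so that the clique $C_x$ sits at the \emph{right end} of the block $T_{\p(x)}$, immediately before the block $T_x$ of its child. Since a node may have several children, each contributing its own clique $C_{x}$, this cannot be done for all children simultaneously; the honest fix is to process the children of $x$ in the fixed sibling order, and before the block of the $j$-th child $x$ place the vertices of $C_{x}$ — duplicating is not allowed, so instead one argues that each clique $C_x$ can be placed as a contiguous \emph{suffix} of the part of $T_{\p(x)}$ lying to the left of $T_x$'s block. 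I would state this carefully as a lemma: one can choose a good layout of $G[T_{\p(x)}]$ together with an interleaving of the children-blocks so that each $C_x$ is an interval ending exactly where $T_x$ begins. This is possible because $C_x$ is a clique in the $(k-1)$-tree $G[T_{\p(x)}]$ — chordality lets us push a clique to any prescribed position while keeping a valid good layout (a clique has a simplicial-like flexibility in a chordal graph's orders), but the precise mechanism for keeping the queue bound is the delicate point.

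Now count rainbows in $L$. Intrabag edges of a single bag $T_x$ contribute at most $t_{k-1}$ to any rainbow, since $L$ restricted to $T_x$ is $L_x$ and two intrabag edges from different bags cannot be nested (different bags occupy disjoint intervals, and ancestor/descendant blocks nest as whole intervals so their internal edges cannot interleave to form a rainbow with each other in a way that exceeds the per-bag count — here one uses that $T$ is a tree so the blocks form a laminar family). Interbag edges from $T_{\p(x)}$ to $T_x$ all have their left endpoint in the clique-interval $C_x$ and their right endpoint in $T_x$, and because $C_x$ is a clique placed as a contiguous block flush against $T_x$, these edges form at most one "group" whose nesting depth is controlled: I claim the interbag edges into a single bag $T_x$ contribute at most one more to the rainbow than the good layout of $G[T_x]$ would — using the good-layout property that the at most $t_{k-1}$ edges sharing any right endpoint $v \in T_x$ go to distinct queues, combined with the fact that all left endpoints sit in a single interval. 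Summing the intrabag contribution $t_{k-1}$ and the interbag contribution over the path in $T$ (only nested pairs matter, and nested interbag edges must lie on a root-to-leaf path), the total rainbow size is at most $t_{k-1} + (t_{k-1}+1) = 2t_{k-1}+1 = 2(2^{k-1}-1)+1 = 2^k - 1 = t_k$, which by Proposition (Heath--Rosenberg) gives a queue layout with $t_k$ queues. Finally, to get the stronger "distinct right endpoint $\Rightarrow$ distinct queue" conclusion, I would not argue via rainbows alone but directly assign queues: for each vertex $v$, its incoming edges from within its own bag use at most $t_{k-1}$ queues (distinct, by induction) and the single bundle of interbag edges into $v$ from $C_x$ can be placed in a fresh palette of $t_{k-1}+1$ further queues indexed so that edges with a common right endpoint still differ, and one checks $t_{k-1} + (t_{k-1}+1) = t_k$ so the palettes fit.

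The main obstacle is exactly the bag-reordering/child-interleaving step: making precise that each clique $C_x$ can be realized as an interval abutting its child's block, for all children of a node at once, while (a) keeping each bag's layout a \emph{good} layout, and (b) ensuring interbag edges into distinct children do not combine into large rainbows. I expect this to require a short structural lemma about good layouts of $(k-1)$-trees — roughly, that the vertices of any clique can be moved to be consecutive and placed anywhere in the order at no cost to the queue number or the good-layout property — and careful bookkeeping of which queue-palette each of the $2^k-1$ queues belongs to at every level of the recursion.
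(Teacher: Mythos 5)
There is a genuine gap, and you have located it yourself: the bag-reordering/child-interleaving step. Your construction needs, for every node $y$ and every child $x$ of $y$, the clique $C_x$ to appear as a contiguous interval of $T_y$ flush against the block of $T_x$. This cannot be arranged simultaneously for all children: the cliques $C_{x_1},\ldots,C_{x_\ell}$ of distinct children overlap and interleave arbitrarily inside $T_y$, and making each one abut its own child's block forces you to interleave child blocks with parent vertices, destroying the laminar/depth-layered structure on which your intrabag rainbow count relies. Moreover, the structural lemma you would need --- that in a good layout of a $(k-1)$-tree any clique can be made consecutive and placed at a prescribed position ``at no cost'' to the queue bound and the distinct-right-endpoint property --- is not proved and is not a known flexibility of queue layouts; nothing in chordality gives it to you. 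Finally, even granting the placement, your bound on the interbag contribution (``at most one more than the good layout of $G[T_x]$'') is asserted rather than argued, and the closing queue-assignment paragraph does not explain why two interbag edges placed in the same ``fresh'' queue but going into \emph{different} children of the same node cannot nest --- which is exactly the hard case.

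The paper's proof resolves precisely this point without ever touching the inductive bag layouts. It keeps each $L_x$ intact, orders the bags in Lex-BFS fashion, and orders siblings $x_1,\ldots,x_\ell$ of a common parent $y$ by the position in the layout of $c_{x_i}$, the \emph{rightmost} vertex of $C_{x_i}$ in $T_y$. Interbag edges are then coloured by reference to the parent bag: an edge $uv$ with $v\in T_x$, $u\in C_x$ gets one dedicated extra colour if $u=c_x$, and otherwise gets the colour (shifted by $t_{k-1}$) of the intrabag edge $uc_x$. A nested pair of same-coloured interbag edges then forces either a violation of the sibling ordering rule or a nested (or same-right-endpoint) pair of same-coloured intrabag edges $uc_x$, $u'c_{x'}$ inside $T_y$, contradicting the strengthened induction hypothesis --- which is exactly why the ``edges with a common right endpoint lie in distinct queues'' strengthening is carried along. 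This yields $2t_{k-1}+1=2^k-1$ queues with no reordering lemma needed; to repair your proposal you would have to either prove the (doubtful) clique-relocation lemma and the simultaneous-abutment claim, or switch to a mechanism of this reference-edge type.
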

\begin{proof}
 We prove the theorem by induction on $k$.
 In the base case $k=0$, the graph $G$ has no edges and thus no queues are needed in a queue layout of $G$.
 
 Suppose now that $G$ is a $k$-tree for some $k\geq 1$, and that the theorem holds for $k-1$.
 We may assume that $G$ is connected, since we can combine layouts of different components of $G$ by putting them next to each other, and since we can resuse queues for different components. 
 Let $(T,\{T_x\colon x\in V(T)\})$ be a tree-partition of $G$ as given by Theorem~\ref{thm:tree-partition}, and denote the root of $T$ by $r$.
 Then we can assign to each node of $T$ a \emph{depth} according to its distance to $r$ (with $r$ being at depth $0$).
 We say that a vertex $v\in V(G)$ is at depth $d$ if $v$ is contained in a bag of some node at depth $d$.
 
 %Now we start to build the linear order $L^G$ for the queue layout.
 In the following, we first construct a linear order $L^G$ for the queue layout of $G$, and then we assign the edges to queues.
 Let us give some intuition of how we obtain $L^G$ now.
 We build $L^G$ by going through the depths one by one (starting with depth $0$).
 That is to say, given the already produced linear order of vertices at depth $d-1$, we construct a linear order of vertices at depth $d$ and append it to the right of the one already produced.
 To do so, we first specify a linear order $L_{d}^T$ on the nodes at depth $d$ in $T$, and then we replace each node $x$ in $L_d^T$ by the linear order of the layout obtained by applying induction to the $(k-1)$-tree $G[T_x]$. 
 
 Now let us be more precise.
 At depth $0$ we only have the root node $r$ of $T$, and hence we set $L_0^T$ to be the linear order consisting only of $r$. 
 We apply induction on the $(k-1)$-tree $G[T_r]$ and obtain a linear order $L_0^G$ of vertices at depth $0$ (as noted before, $T_r$ actually contains only one vertex).
 
 So suppose that we have built the linear order $L_{d-1}^G$ containing all vertices at depth at most $d-1$ in $G$.
 Let $L_{d-1}^T$ be the linear order on the nodes at depth $d-1$ that was produced in the last step of our procedure.
 We proceed by constructing $L_d^T$ now.
 
 As in a \emph{lexicographical breadth-first ordering} (Lex-BFS ordering), we order the nodes according to their parent nodes.
 That is, for nodes $x,y$ at depth $d$ we set $x<y$ in $L_d^T$ if $\p(x)<\p(y)$ in $L_{d-1}^T$.
 It remains to specify the order of nodes sharing a parent node.
 So suppose that $x_1,\ldots,x_{\ell}$ have the same parent node $y$ at depth $d-1$.
 Consider the cliques $C_{x_1},\ldots,C_{x_{\ell}}$ in $T_y$.
 For each $i\in\set{1,\ldots,\ell}$, let $c_{x_i}$ be the rightmost vertex of $C_{x_i}$ in $L_{d-1}^G$.
 Then we order $x_1,\ldots,x_{\ell}$ according to the positions of $c_{x_1},\ldots,c_{x_{\ell}}$, which means that we set $x_i<x_j$ in $L_d^T$ if $c_{x_i}$ appears before $c_{x_j}$ in $L_{d-1}^G$. 
 Nodes with the same parent node and with the same rightmost vertex in their corresponding clique are still not ordered with this rule.
 We order those nodes arbitrarily so that $L_d^T$ becomes a linear order on nodes at depth $d$.
 To illustrate this procedure, consider the following linear order, where vertices at depth at most $1$ of our example from Figure~\ref{fig:tree-partition} have been ordered so far.
 
 \begin{figure}[h]
  \centering
  \includegraphics[scale=1.2]{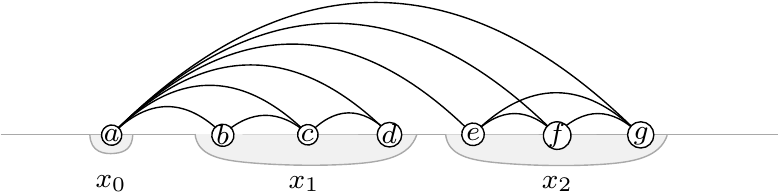}
 \end{figure}
 
 Here we have $x_1<x_2$ in $L_1^T$, and since $x_1=\p(x_3)=\p(x_4)$ and $x_2=\p(x_5)=\p(x_6)$, this implies that $x_3,x_4$ are placed before $x_5,x_6$ in $L_2^T$.
 As $c_{x_3}=c<d=c_{x_4}$ in the order, we set $x_3<x_4$ in $L_2^T$.
 The order between $x_5$ and $x_6$ in $L_2^T$ can be chosen arbitrarily as $c_{x_5}=c_{x_6}=g$.
 
 By Theorem~\ref{thm:tree-partition} we have that the bag of each node $x$ in the tree-partition induces a $(k-1)$-tree, which allows us to apply induction.
 Let $L_x$ be the linear order of the queue layout obtained in this way.
 Now we replace each node $x$ in $L_d^T$ by the linear order $L_x$. 
 We put the resulting order of vertices at depth $d$ to the right of $L_{d-1}^G$, which yields a linear order $L_d^G$ on all vertices at depth at most $d$.
 This concludes the step for vertices at depth $d$.

 Let $L^G$ be the linear order on the vertices of $G$ obtained after going through all the depths.
 Similarly, let $L^T$ be the linear order on the nodes of $T$ obtained during the procedure.
 Recall that by our applied rules, $L^T$ has the following properties. 
 For nodes $x,y\in V(T)$ with depths $\dep(x)$ and $\dep(y)$, respectively, it holds that
 \begin{align}
  \text{if }\dep(x)<\dep(y) \text{ in }T\text{, then } x<y \text{ in }L^T,\label{item:depth}\\
  \text{if }\p(x)<\p(y)\text{ in }L^T\text{, then }x<y \text{ in }L^T.\label{item:parent}
 \end{align}
 Property \eqref{item:depth} asserts that $L^T$ is a BFS ordering, and combined with property \eqref{item:parent} we have that $L^T$ is a Lex-BFS ordering.
 Therefore, no two edges of $T$ are nested in $L^T$.
 This has an immediate consequence for interbag edges as they go along edges of $T$.
 Let $uv$ and $u'v'$ be interbag edges such that $u<v$ and $u'<v'$ in $L^G$.
 Then we have the property that if $uv$ and $u'v'$ are nested in $L^G$, then $u$ and $u'$ are contained in the same bag of the tree-partition. 
 
 We need to assign the edges of $G$ to queues now.
 For convenience, let us instead first color the edges with colors from $\{1,\ldots,2t_{k-1}+1\}$ and then show that each color class induces a queue with respect to $L^G$.
 
 We start with the intrabag edges. 
 For each bag $T_x$, we color the contained edges according to the queue assignment that is given by the induction hypothesis for the $(k-1)$-tree $G[T_x]$.
 We use the colors $1,\ldots,t_{k-1}$ for this coloring (so we reuse the same colors for different bags). 
 
 Let us continue with the interbag edges now, and let $uv\in E(G)$ be one of those.
 Say, $u$ is at a smaller depth than $v$.
 Then there is a node $x$ in $T$ such that $v\in T_x$ and $u\in T_{\p(x)}$.
 If $u=c_x$, then we color $uv$ with $2t_{k-1}+1$.
 Otherwise, if $u\neq c_x$, then we color $uv$ with $i+t_{k-1}$, where $i\in\set{1,\ldots,t_{k-1}}$ is the color of the intrabag edge $uc_x$.

 \begin{claim}
  For each color $c\in\set{1,\ldots,2t_{k-1}+1}$, the edges of $G$ colored with $c$ form a queue with respect to $L^G$.
 \end{claim}
 \begin{proof}
 Suppose for a contradiction that there are edges $uv$ and $u'v'$ with color $c$ that are nested in $L^G$.
 Say, we have $u<u'<v'<v$ in $L^G$.
 
 If $c\in\set{1,\ldots,t_{k-1}}$ then $uv$ and $u'v'$ are both intrabag edges.
 However, if they lie within the same bag, then they cannot be nested as we used a valid queue layout from the induction hypothesis.
 And if they lie in different bags, then both endpoints of one edge lie before both endpoints of the other edge in $L^G$.
 Thus, the two edges are not nested in $L^G$, a contradiction.
 
 So we have $c\geq t_{k-1}+1$ and consequently $uv$ and $u'v'$ are interbag edges. 
 By the consequences of properties~\eqref{item:depth} and \eqref{item:parent} for interbag edges, it follows that $u$ and $u'$ both are contained in the same bag.
 Suppose that this is bag $T_y$, and let $x,x'\in V(T)$ be such that $v\in T_x$ and $v'\in T_{x'}$.
 Note that $u\in C_x$ and $u'\in C_{x'}$.
 We distinguish two cases now.
 
 First, suppose $c=2t_{k-1}+1$.
 Then $u$ and $u'$ are rightmost in $L^G$ among vertices of $C_x$ and $C_{x'}$, respectively.
 So we have $u=c_x$ and $u'=c_{x'}$, and hence $x\neq x'$.
 Recall that since $x$ and $x'$ share the parent $y$, they are ordered in $L^T$ according to the positions of $c_x$ and $c_{x'}$ in $L^G$.
 Thus, as $c_x=u<u'=c_{x'}$ in $L^G$, this implies $x<x'$ in $L^T$.
 It follows that vertices of $T_x$ lie before vertices of $T_{x'}$ in $L^G$, a contradiction to our assumption $v'<v$ in $L^G$.
 
 So we are left with the case $c\in\set{t_{k-1}+1,\ldots,2t_{k-1}}$.
 Let $i\in\set{1,\ldots,t_{k-1}}$ be such that $c=i+t_{k-1}$.
 This time we have $u\neq c_x$ and $u'\neq c_{x'}$.
 Since $u\in C_x$ and $u'\in C_{x'}$, it follows that $u<c_x$ and $u'<c_{x'}$ in $L^G$.
 By our coloring, edges $uc_x$ and $u'c_{x'}$ are colored with $i$.
 This implies $c_x\neq c_{x'}$ as otherwise $c_x$ is the right endpoint of two intrabag edges of the same color, which is contradicting the induction hypothesis.
 In particular, this yields $x\neq x'$.
 By our assumption that $v'<v$ in $L^G$, we conclude $x'<x$ in $L^T$.
 And since $x$ and $x'$ are ordered in $L^T$ according to the positions of $c_x$ and $c_{x'}$ in $L^G$, this in turn implies $c_{x'}<c_x$ in $L^G$.
 Together with $c_{x'}$ being the rightmost vertex of $C_{x'}$ in $L^G$, we deduce $u<u'<c_{x'}<c_x$ in $L^G$.
 It follows that the edges $uc_x$ and $u'c_{x'}$ are nested.
 However, note that both edges are contained in $T_y$ and have the same color $i$.
 This is a contradiction to the fact that we colored these edges according to the queue layout obtained by the induction hypothesis.
 This concludes the proof of the claim.
 \end{proof}
 To complete the induction step, we have to show that for each $v\in V(G)$, no two edges with $v$ as their right endpoint in $L^G$ are colored with the same color.
 Suppose for a contradiction that there are distinct edges $uv$ and $u'v$ colored with $c$ such that $u<v$ and $u'<v$ in $L^G$.
 By the induction hypothesis we cannot have $c\in\{1,\ldots,t_{k-1}\}$.
 Therefore, both edges are interbag edges and $c\in\{t_{k-1}+1,\ldots,2t_{k-1}+1\}$.
 Let $x\in V(T)$ be such that $v\in T_x$.
 Then $u$ and $u'$ are vertices of the clique $C_x$.
 Since $c_x$ is the unique vertex of $C_x$ that is connected by an edge in color $2t_{k-1}+1$ to $v$, we deduce $c\neq 2t_{k-1}+1$.
 However, then our coloring rule for the edges $uv$ and $u'v$ implies that the edges $uc_x$ and $u'c_x$ are colored with $c-t_{k-1}\in\{1,\ldots,t_{k-1}\}$.
 As $c_x$ is the rightmost vertex of $C_x$ with respect to $L^G$, we obtain that the intrabag edges $uc_x$, $u'c_x$ have the same color and the same right endpoint in $L^G$, which is a contradiction to the induction hypothesis.
 We conclude that any two edges with the same right endpoint in $L^G$ are colored with different colors.
 
 Finally, since we use $2t_{k-1}+1=2(2^{k-1}-1)+1=2^k-1$ queues in our layout of $G$, this completes the proof of the theorem. 
\end{proof}

We continue with a proof of Theorem~\ref{thm:main-tn} now.
A proper coloring of the vertices of a graph $G$ is \emph{acyclic} if any two color classes induce a forest (so each cycle receives at least three colors).
The minimum number of colors used in an acyclic coloring of $G$ is the \emph{acyclic chromatic number} of $G$.
Dujmovi\'{c} et al.~\cite{DMW05} obtained the following relationship between track-number and queue-number.
\begin{lemma}[\cite{DMW05}]\label{lem:acyclic}
 Every graph $G$ with acyclic chromatic number at most $c$ and queue-number at most $q$ has track-number
 \[
  \tn(G)\leq c(2q)^{c-1}.
 \]

\end{lemma}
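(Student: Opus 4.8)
The plan is to reuse the vertex order of a queue layout and to refine an acyclic colouring of $G$ into the tracks. Fix a queue layout of $G$ with its edges partitioned into queues $Q_1,\dots,Q_q$, with underlying vertex order $\sigma$, and fix an acyclic colouring of $G$ with colour classes $V_1,\dots,V_c$. Since the colouring is acyclic, for every pair $i\ne j$ the bipartite subgraph $F_{ij}=G[V_i\cup V_j]$ is a forest, and I would root each of its components; this rooted forest structure is the only use we make of the colouring being acyclic rather than merely proper.

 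The core step is a labelling. To every vertex $v$, say of colour $i$, I would assign a label $\lambda(v)=\bigl(i,(\lambda_j(v))_{j\ne i}\bigr)$, where for $j\ne i$ the coordinate $\lambda_j(v)$ records the pair consisting of the index of the queue containing the edge joining $v$ to its parent in $F_{ij}$ together with one bit indicating whether that parent precedes or follows $v$ in $\sigma$; the few roots get an arbitrary value, and by choosing the rootings so that one side of each bichromatic forest never supplies a root this does not push the number of coordinate values above $2q$. Hence there are at most $c\,(2q)^{c-1}$ distinct labels. The track assignment then declares $u$ and $v$ to lie on a common track precisely when $\lambda(u)=\lambda(v)$, each track being ordered by the restriction of $\sigma$; each track is independent because equal labels force equal colours. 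The remaining work is to show that this track assignment, together with these orders, has no X-crossing, which yields $\tn(G)\le c\,(2q)^{c-1}$.

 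To rule out X-crossings, I would suppose some tracks $T\subseteq V_i$ and $T'\subseteq V_j$ cross, witnessed by edges $v_1b_1$ and $v_2b_2$ with $v_1,v_2\in T$, $b_1,b_2\in T'$, and $v_1\prec v_2$, $b_2\prec b_1$ in $\sigma$. Both edges lie in $F_{ij}$, so in the rooted forest each joins a vertex to its parent, and I would distinguish cases by which endpoint that is. In the \emph{aligned} cases, where both witnessing edges are parent edges of their $V_i$-endpoints (or both are parent edges of their $V_j$-endpoints), the equality $\lambda_j(v_1)=\lambda_j(v_2)$ (respectively $\lambda_i(b_1)=\lambda_i(b_2)$) forces the two edges into one queue with a consistent orientation relative to $\sigma$, and then a short inspection of the order of the four vertices in $\sigma$ shows that the two edges must be nested, contradicting that they share a queue. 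I expect the main obstacle to be the \emph{mixed} case, where one witnessing edge is the parent edge of its $V_i$-endpoint while the other is the parent edge of its $V_j$-endpoint: there the label equalities above constrain different edges and do not directly compare $v_1b_1$ with $v_2b_2$, so one has to set up the rootings carefully (for instance rooting all bichromatic forests on a consistent side, or encoding information from both possible rootings into the coordinate while still using only $2q$ values) so that every edge capable of witnessing an X-crossing is pinned down by some label; once that is arranged the same nesting argument closes this case and completes the proof.
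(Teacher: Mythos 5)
You should first note that the paper contains no proof of this lemma at all: it is imported verbatim from \cite{DMW05}, so your argument has to stand entirely on its own. Your counting ($c$ colours times $(2q)^{c-1}$ label vectors), the independence of each track, and the two \emph{aligned} cases are fine, but the \emph{mixed} case that you defer is not a technicality --- it is the heart of the lemma, and as written your construction genuinely fails there. In that case the equalities $\lambda_j(v_1)=\lambda_j(v_2)$ and $\lambda_i(b_1)=\lambda_i(b_2)$ constrain the parent edges of $v_1,v_2$ and of $b_1,b_2$, i.e.\ up to four edges none of which need coincide with the pair $\{v_1b_1,v_2b_2\}$ witnessing the crossing, so the queue hypothesis is never brought to bear on that pair and no contradiction can be extracted.

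Concretely, let $G$ be the disjoint union of the two paths $s-w-u$ and $t-v-x$, with $s,u,v$ of colour $1$ and $w,x,t$ of colour $2$ (a proper, trivially acyclic colouring). Take the vertex order $u<x<v<w<t<s$ and the queues $Q_1=\{uw,vt\}$ and $Q_2=\{xv,ws\}$; neither pair is nested, so this is a legitimate queue layout with $q=2$. Root the first path at $s$ and the second at $t$, a choice your plan allows. Then $u$ and $v$ carry the same label (parent edge in $Q_1$, parent to the right), as do $w$ and $x$ (parent edge in $Q_2$, parent to the right), yet $u<v$ and $x<w$, so the edges $uw$ and $vx$ form an X-crossing between the two resulting tracks --- and here all four vertices are non-roots, so no root convention is being abused. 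Your suggested repairs do not close this: if roots are forced onto the colour-$1$ side, the second path must be rooted at $v$, whose default label carries no information about its incident edges and can therefore again coincide with $\lambda_2(u)$ (rename the queues or mirror the order if necessary), reproducing the same crossing; and it is unclear how ``encoding both rootings'' could fit into only $2q$ values per coordinate. So the missing ingredient is precisely the idea that makes the argument of \cite{DMW05} work, and the proposal as it stands is not a proof.
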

It is well-known that graphs of tree-width at most $k$ have acyclic chromatic number at most $k+1$.
Using this, we immediately obtain a proof of our claimed upper bound on the track-number of bounded tree-width graphs.
\begin{proof}[Proof of Theorem~\ref{thm:main-tn}]
 Combine Theorem~\ref{thm:main-upper} and Lemma~\ref{lem:acyclic}.
\end{proof}

\section{Lower bounds -- Proof of Theorem~\ref{thm:lower-bound}}\label{sec:lower-bound}
This section is devoted to a proof of Theorem~\ref{thm:lower-bound}.
We start by introducing a two-player game between Alice and Bob on $k$-trees (where $k\geq 2$), in which Bob has to build a queue-layout of the $k$-tree to be presented by Alice.
We call it the \emph{$k$-queue game}.

The game starts with a $(k+1)$-clique and an arbitrary linear order on the vertices of this clique.
Now, each round of the game consists of two moves.
First, Alice introduces a new vertex $v$ and chooses a $k$-clique of the current graph to which $v$ becomes adjacent.
And second, Bob has to specify the position in the current layout where $v$ is inserted.
Clearly, since we start with a $(k+1)$-clique, the graphs obtained during the $k$-queue game remain $k$-trees.
It is the goal of Alice to increase the maximum size of a rainbow in the layout, while Bob tries to keep it small.
Alice \emph{wins} the $k$-queue game if Bob creates a rainbow of size $k+1$ in the layout.
We aim to show the following.
\begin{lemma}\label{lem:game-bound}
 For each $k\geq 1$, there is an integer $d_k$ such that Alice has a strategy to win the $k$-queue game within at most $d_k$ rounds.  
\end{lemma}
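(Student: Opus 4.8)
The plan is to prove Lemma~\ref{lem:game-bound} by induction on $k$, mirroring the inductive structure used for the upper bound but now forcing a large rainbow instead of avoiding one. The base case $k=1$ is easy: Alice repeatedly attaches new vertices to a fixed vertex $w$; wherever Bob places them, two of the resulting edges at $w$ must nest once enough leaves have been added (a star on sufficiently many vertices has no $1$-queue layout in any order — in fact four leaves suffice), so $d_1$ is a small constant. For the inductive step, assume Alice has a winning strategy in the $(k-1)$-queue game using at most $d_{k-1}$ rounds. Alice will play so as to produce, inside the $k$-tree she builds, many vertex-disjoint (or at least ``independent'' in the relevant sense) copies of hard $(k-1)$-tree configurations, each sitting on top of a common $k$-clique, and then argue that Bob's placement of these copies relative to one another is forced enough that one of the nested pairs from a copy can be extended by an edge into the shared clique, raising the rainbow size from $k-1$ to $k$.

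Concretely, I would fix a $k$-clique $K$ of the starting $(k+1)$-clique and have Alice build configurations over $K$. Each time she wants a fresh copy, she picks a vertex $u \in K$, introduces a new vertex and attaches it to the $k$-clique $(K \setminus \{u\}) \cup \{u'\}$ for a suitable recently-played $u'$, so that the $(k-1)$-clique $K \setminus \{u\}$ plus the new vertices forms the arena of a $(k-1)$-queue game that Alice plays out to force a $k$-element rainbow there. The key structural point is the same as in Theorem~\ref{thm:main}: once Bob has chosen positions, a $(k-1)$-rainbow $a_1 < \cdots < a_{k-1} < b_{k-1} < \cdots < b_1$ inside one copy, together with the clique $K \setminus \{u\}$ sitting ``beside'' it, forces that some clique vertex lies strictly outside the interval $[a_1, b_1]$, or else Alice can pre-commit to attaching the next vertex to a clique that produces a nesting edge. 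By running enough copies — say $N$ of them, where $N$ is chosen large relative to $d_{k-1}$ and $k$ — and applying a pigeonhole/Ramsey-type argument to how the copies interleave in Bob's order, Alice isolates one copy whose entire $(k-1)$-rainbow is nested inside a single edge from $K$, giving a $(k+1)$-term rainbow and thus a win. Setting $d_k := N \cdot d_{k-1} + O(1)$ finishes the induction.

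The main obstacle is making the ``one extra edge from the clique nests the whole $(k-1)$-rainbow'' step genuinely forced rather than merely possible: Bob chooses positions adaptively, so Alice cannot simply hope a clique vertex ends up on the correct side. The fix is that Alice does not need a specific clique vertex — she needs \emph{some} vertex that will end up as a common far-left or far-right endpoint for a large sub-rainbow. Since the clique $K$ has $k$ vertices with fixed relative order, and since Alice can refuse to reveal which clique vertex a future gadget hangs from until after seeing Bob's earlier placements, she can always route the next gadget through the clique vertex that is currently extremal on the appropriate side; the only way Bob avoids this is to keep moving the relevant extreme, but the linear order is finite and monotone pressure (each gadget pushes the boundary one way) eventually pins him. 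Quantitatively this is where the blow-up $d_k = \exp(\exp(\cdots))$-type bound could come from, but since the lemma only asserts \emph{existence} of a finite $d_k$, I would not optimize it here; a clean way to organize the bookkeeping is to phase the game into $k$ stages, stage $j$ devoted to locking in a common $j$-th-from-outside endpoint, and invoke the induction hypothesis within each stage. I expect the write-up to spend most of its length on this forcing argument and on verifying that the graph built stays a $k$-tree, the latter being immediate from the recursive definition since every new vertex is always attached to a genuine $k$-clique.
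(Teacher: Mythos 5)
Your proposal has two genuine gaps, one small and one fatal. The small one is the base case: two edges that share an endpoint are never nested (nesting requires $u<u'<v'<v$ with all four vertices distinct), so a star $K_{1,n}$ admits a $1$-queue layout under \emph{every} vertex order; the claim that four leaves force two queues is simply false, and the $k=1$ case already needs an adaptive forcing argument rather than a static example. The fatal gap is the step you yourself flag: after the simulated $(k-1)$-game (played as a cone over a fixed apex, which is the right way to set it up --- note that as literally written, attaching each new vertex to $(K\setminus\{u\})\cup\{u'\}$ keeps $k-1$ clique vertices fixed and simulates only a tree game, not the $(k-1)$-queue game) Alice can indeed force a $k$-rainbow, but that only certifies queue-number at least $k$. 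Everything hinges on forcing one further edge whose endpoints lie strictly outside the whole rainbow, and your argument for this ("route the next gadget through the currently extremal clique vertex; the linear order is finite and monotone pressure eventually pins him") does not work: the linear order is not finite in any useful sense, since Bob extends it every round and may place every new vertex strictly inside the interval spanned by the current extremes, or adjacent to the rainbow's endpoints, indefinitely. Nothing in the proposal punishes him for doing so, and no concrete pigeonhole/Ramsey statement is formulated that would.

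Overcoming exactly this adaptivity is the entire content of the paper's proof, which is not an induction on $k$ at all but a direct strategy built from three forcing lemmas: Lemma~\ref{lemma:stack-outside} shows that if Bob keeps going \emph{inside} a $k$-clique, the cliques he creates are strictly covered by shorter and shorter edges, so repeating this builds a $(k{+}1)$-rainbow by itself --- hence Bob can be forced to go outside a covered clique arbitrarily often; Lemma~\ref{lem:outside-right} uses $2k+1$ vertices forced to the right of a clique to create two flanking $k$-rainbows that make any later placement to the left of the new rightmost vertex an immediate loss, so Alice can march a sequence $v_{k+1},v_{k+2},\ldots$ rightward; and Lemma~\ref{lemma:winning-conf} identifies the resulting nested configuration of three edges around a clique as a winning position, finished off by a pigeonhole count. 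Your plan would have to reconstruct mechanisms of this strength (in particular, an analogue of the "going inside loses" lemma and of the flanking-rainbow punishment) before the outer-edge step becomes forced rather than hoped for; as written, the inductive framework does not reduce that difficulty, it merely relocates it.
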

Before we prove this lemma, we use it to show Theorem~\ref{thm:lower-bound}.
Let us make some new definitions first.

Given a graph $H$ and a clique $C$ in $H$, we \emph{stack} on $C$ in $H$ by introducing a new vertex $v_C$ and by making $v_C$ adjacent to the vertices of $C$.
(Note that if we stack on a $k$-clique of a $k$-tree, then the resulting graph is also a $k$-tree.)
If a graph $H'$ is obtained by simultaneously stacking on each $k$-clique of $H$, then we call $H'$ the \emph{$k$-stack} of $H$.

We iteratively construct a family of $k$-trees $(G_i)_{i\in\mathbb{N}}$ now.
We let $G_0$ be a $(k+1)$-clique, and given $i\geq 1$, we define $G_i$ to be the $k$-stack of $G_{i-1}$.
Note that with this definition $G_i$ contains $G_{i-1}$ as an induced subgraph.
In fact, $G_i$ might contain several distinct induced subgraphs being isomorphic to $G_{i-1}$.
For us it is important that $G_i$ contains an \emph{intrinsic} copy $G_{i-1}'$ of $G_{i-1}$ as an induced subgraph, which is such that $G_i$ can be obtained by taking the $k$-stack of $G_{i-1}'$. 

The following lemma implies Theorem~\ref{thm:lower-bound}
\begin{lemma}
 Given $k\geq 2$, let $d_k$ be as in the statement of Lemma~\ref{lem:game-bound}.
 Then the queue-number of the $k$-tree $G_{d_k}$ is at least $k+1$.
\end{lemma}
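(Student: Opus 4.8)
The plan is to derive the lemma from Lemma~\ref{lem:game-bound} by a standard strategy-stealing / simulation argument. Suppose for a contradiction that $\qn(G_{d_k})\le k$. Then there is a queue layout of $G_{d_k}$ using at most $k$ queues, and in particular a linear order $L$ on $V(G_{d_k})$ with no rainbow of size $k+1$. I would use this fixed layout to design a winning strategy for Bob in the $k$-queue game, contradicting Lemma~\ref{lem:game-bound} which guarantees Alice wins in at most $d_k$ rounds.

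The key point is to match up the rounds of the game with the layers of the iterated stacking construction. Recall that $G_i$ is obtained from its intrinsic copy $G_{i-1}'$ by stacking on every $k$-clique. The game starts with a $(k+1)$-clique together with an arbitrary order on its vertices; Bob first embeds this $(k+1)$-clique into $G_{d_k}$ as (an isomorphic copy of) $G_0$, respecting the initial order — this can be arranged because $G_0$ is itself a $(k+1)$-clique and any linear order on $k+1$ vertices is realizable inside the layout $L$ restricted to those vertices, after possibly relabeling. Then, maintaining the invariant that after round $i$ the current graph of the game has been embedded as an induced subgraph of $G_i \subseteq G_{d_k}$ with the game-layout order agreeing with the restriction of $L$, Bob responds to Alice's move as follows: when Alice introduces a new vertex $v$ adjacent to a $k$-clique $C$ of the current graph, that $k$-clique sits inside $G_i$, and by construction $G_{i+1}$ (hence $G_{d_k}$) contains the stacked vertex $v_C$ adjacent to exactly $C$; Bob maps $v$ to $v_C$ and inserts $v$ into the game-layout at the position dictated by where $v_C$ sits in $L$. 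After $d_k$ rounds the embedded graph lives inside $G_{d_k}$ and the game-layout order is the restriction of $L$, so it has no rainbow of size $k+1$. Hence Bob never loses within $d_k$ rounds, contradicting Lemma~\ref{lem:game-bound}. Therefore $\qn(G_{d_k})\ge k+1$.

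The main obstacle — and the step requiring care — is verifying that Bob can always honour the embedding invariant, i.e.\ that every $k$-clique Alice might pick in the current game graph really does have its private stacked neighbour available in $G_{d_k}$, and that this neighbour has not already been used for a different vertex of the game. This is where the notion of the \emph{intrinsic} copy $G_{i-1}'$ matters: because $G_i$ is the $k$-stack of the specific copy $G_{i-1}'$, each $k$-clique $C$ of $G_{i-1}'$ gets its own fresh vertex $v_C$, and distinct $k$-cliques get distinct vertices; so as long as the game graph after round $i$ is embedded precisely as $G_i$ (not merely as some isomorphic-to-$G_i$ subgraph that might fail to be the intrinsic one), the stacking move Alice makes corresponds bijectively to a stacking vertex present in $G_{i+1}$. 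One must also check that inserting $v$ at the $L$-position of $v_C$ keeps the game-layout order equal to $L$ restricted to the now-larger vertex set, which is immediate since we are literally reading off positions from $L$. With these points in place the contradiction with Lemma~\ref{lem:game-bound} is complete, and since $G_{d_k}$ is a $k$-tree (each $k$-stack of a $k$-tree is a $k$-tree), Theorem~\ref{thm:lower-bound} follows for every $k\ge 2$.
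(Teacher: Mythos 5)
Your argument is correct, and it is the same strategy-stealing idea as the paper's proof: assume a layout $L$ of $G_{d_k}$ with no $(k+1)$-rainbow and let Bob read his insertions off $L$, contradicting Lemma~\ref{lem:game-bound}. The implementation differs in one respect worth noting. The paper first passes to a \emph{variant} of the game in which Alice stacks on every $k$-clique simultaneously in each round; this variant is only harder for Bob, and it has the convenience that after $i$ rounds the game graph is exactly (isomorphic to) $G_i$, so Bob's strategy is simply to realize $L$ restricted to a nested chain of intrinsic copies $H_0\subseteq\cdots\subseteq H_{d_k}=G_{d_k}$, with no further verification needed. You instead simulate the original game directly, maintaining the invariant that after round $i$ the game graph is embedded in $G_i\subseteq G_{d_k}$ with the game order equal to the restriction of $L$. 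This buys a more direct contradiction with Lemma~\ref{lem:game-bound} (no auxiliary variant game), but it is exactly where the extra check you flag becomes necessary: one must argue that any $k$-clique $C$ Alice names in the current game graph is a $k$-clique of $G_i$ (true, since the embedding is an induced, in fact edge-preserving, embedding into $G_i$), and that $G_{i+1}$, being the $k$-stack of $G_i$, supplies a fresh stacked vertex $v_C\notin V(G_i)$ adjacent to exactly the image of $C$, so it cannot collide with previously used vertices; this also covers Alice re-stacking on the same clique in later rounds, since that clique persists as a $k$-clique of every later layer. With that verification (which you carry out correctly), placing the new game vertex at the $L$-position of $v_C$ preserves the invariant, and after $d_k$ rounds the game layout is an order-isomorphic copy of $L$ restricted to the image, hence rainbow-free of size $k+1$ --- the desired contradiction. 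So both proofs are sound; the paper's detour through the simultaneous-stacking variant trades your embedding bookkeeping for a trivial round-by-round correspondence, while yours is slightly more self-contained.
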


\begin{proof}
 Consider the following variant of the $k$-queue game.
 Alice's move in a round of the variant consists of simultaneously stacking on each possible $k$-clique.
 It is then Bob's task in this round to insert all the newly introduced vertices in the current layout.
 Again, Alice wins the game when a rainbow of size $k+1$ appears in the layout.
 
 Clearly, for Bob this variant is harder than the $k$-queue game, in the sense that when Alice has a strategy to win the $k$-queue game within $d$ rounds, then she also has a strategy to win the variant within $d$ rounds.
 In particular, Lemma~\ref{lem:game-bound} also holds for the variant.
 
 Now suppose for a contradiction that there is a linear order $L$ on the vertices of $G_{d_k}$ such that there is no rainbow of size $k+1$ in $L$.
 We claim that Bob can use $L$ as an instruction to avoid rainbows of size $k+1$ during the first $d_k$ rounds in the variant of the $k$-queue game.
 
 To see this, observe that after $i$ rounds of the variant, the game graph is isomorphic to $G_i$.
 This gives rise to a strategy for Bob.
 He only has to fix induced subgraphs $H_0,H_1,\ldots,H_{d_k}$ of $G_{d_k}$ such that $H_{d_k}=G_{d_k}$ and such that $H_{i-1}$ is the intrinsic copy of $G_{i-1}$ in $H_i$ for each $i\in\{1,\ldots,d_k\}$. (Note that $H_i$ is isomorphic to $G_i$).
 Then $L|_{V(H_i)}$ is an extension of $L|_{V(H_{i-1})}$ for each $i\in\{1,\ldots,d_k\}$.
 Therefore, Bob can ensure that the linear order after $i$ rounds is equal to $L|_{V(H_i)}$.
 Indeed, he only has to read from $L$ how to extend the layout in each round.
 Applying this strategy, the linear order built after $d_k$ rounds is equal to $L$.
 As $L$ does not contain a rainbow of size $k+1$, Bob can prevent Alice from winning within the first $d_k$ rounds.
 This is a contradiction to Lemma~\ref{lem:game-bound} and completes the proof.
\end{proof}

The rest of this section is devoted to a proof of Lemma~\ref{lem:game-bound}.
We proceed with some definitions that will help us to talk about the $k$-queue game.

Let $G$ be a $k$-tree designed by Alice during the game and let $L$ be the linear order on $V(G)$ built by Bob.
Given $x,y\in V(G)$, we say that $x$ \emph{lies left of} $y$ in $L$, if $x<y$ in $L$, and we say that $x$ \emph{lies right of} $y$ in $L$, otherwise.
We denote the leftmost and the rightmost vertex of a subgraph $H$ of $G$ with respect to $L$ by $\ell(H)$ and $r(H)$, respectively.
An edge $e$ of $G$ \emph{covers} a subgraph $H$ of $G$ in $L$ if $\ell(e)\leq \ell(H)<r(H)\leq r(e)$ in $L$.
The edge $e$ \emph{strictly covers} $H$ if we have $\ell(e)<\ell(H)$ and $r(H)<r(e)$ in $L$.
Suppose Alice chooses to stack on the clique $C$ in her next move.
Then we say that Bob \emph{goes inside} $C$ if he places the new vertex $v_C$ such that $\ell(C)<v_C<r(C)$ in the layout.
Otherwise, we say that Bob \emph{goes outside} $C$.
If Bob places $v_C$ such that $r(C)<v_C$ in the layout, then he \emph{goes to the right outside} of $C$.

We continue by developing a strategy for Alice to win the $k$-queue game within a finite number of rounds.
Whenever we write that Alice \emph{can force} Bob to make certain moves, then we mean that she has a strategy to win the game unless Bob does these moves.

\begin{lemma}\label{lemma:stack-outside}
 For any $k$-clique $C$ in the game graph and any positive number $d$, Alice can force Bob to go outside some $k$-clique $C^*$, which is covered by the edge $\ell(C)r(C)$, for at least $d$ times.
\end{lemma}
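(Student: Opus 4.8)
The statement to prove is Lemma~\ref{lemma:stack-outside}: for any $k$-clique $C$ in the game graph and any $d>0$, Alice can force Bob to go outside (indeed, to the right outside or left outside of) some $k$-clique $C^*$ that is covered by the edge $\ell(C)r(C)$, at least $d$ times. I would prove this by induction on $d$, maintaining as the induction hypothesis a slightly strengthened statement that keeps track of a nested family of cliques whose spans shrink at each step. The base case $d=1$ is where the main idea lives: starting from $C$, Alice repeatedly stacks on $k$-cliques; as long as Bob always \emph{goes inside}, the new vertex $v_C$ lies strictly between $\ell(C)$ and $r(C)$, so the clique spanned by $v_C$ together with an appropriate $(k-1)$-face of $C$ (or of a previously obtained clique) has strictly smaller span in $L$ than $C$ did. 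Since the linear order $L$ is finite, this strictly-decreasing process cannot continue forever, so Bob must eventually go outside some $k$-clique $C^*$, and by construction every clique produced along the way — in particular $C^*$ — sits inside the interval $[\ell(C), r(C)]$, i.e.\ is covered by the edge $\ell(C)r(C)$.

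**Key steps, in order.** First I would set up the ``shrinking'' invariant precisely: Alice keeps a current $k$-clique $C_0 = C$, and in round $i$ she stacks on $C_{i-1}$; if Bob goes inside $C_{i-1}$, producing $v_i$ with $\ell(C_{i-1}) < v_i < r(C_{i-1})$, then among the $k$ cliques of the form $\{v_i\} \cup (C_{i-1}\setminus\{w\})$ for $w\in C_{i-1}$ she picks one, $C_i$, whose vertex set lies in $[\ell(C_{i-1}),r(C_{i-1})]$ and which strictly loses at least one endpoint (precisely: at least one of $\ell(C_i) > \ell(C_{i-1})$, $r(C_i) < r(C_{i-1})$ holds, and never $\ell(C_i)<\ell(C_{i-1})$ nor $r(C_i)>r(C_{i-1})$) — this is possible because dropping the old leftmost or rightmost vertex of $C_{i-1}$ and inserting the interior vertex $v_i$ does the job. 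Second, I would argue that the pair $(\text{position of }\ell(C_i),\ \text{position of }r(C_i))$ strictly decreases in the lexicographic-style order that measures ``span'', so after finitely many rounds Bob is forced to go outside some $C_i =: C^*$. Third, since $[\ell(C_i),r(C_i)] \subseteq [\ell(C_{i-1}),r(C_{i-1})] \subseteq \cdots \subseteq [\ell(C),r(C)]$, the clique $C^*$ is covered by $\ell(C)r(C)$, settling $d=1$. Finally, for the inductive step $d \to d+1$: after Alice forces the first ``go outside'' at clique $C^*$, she recognizes that the configuration still contains a $k$-clique covered by $\ell(C)r(C)$ on which she can recurse — e.g.\ $C^*$ itself, or a sub-clique of the currently built structure lying in $[\ell(C),r(C)]$ — and invokes the induction hypothesis with parameter $d$ on that clique; each invocation produces one more ``go outside'' event inside the same interval $[\ell(C),r(C)]$, so after $d+1$ applications Alice has forced $d+1$ of them. (One must be slightly careful that the $d$ forced events are distinct; this follows because each recursion operates on a fresh set of newly stacked vertices, or one can simply let Alice count: she stops as soon as the $d$-th ``go outside'' occurs.)

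**Main obstacle.** The delicate point is verifying that the $k$-clique $C_i$ Alice selects at each step is genuinely \emph{available} to stack on in the game — i.e.\ that it is actually a clique of the current game graph — and simultaneously that it keeps the ``covered by $\ell(C)r(C)$'' property. Stacking on $C_{i-1}$ makes $v_i$ adjacent to all of $C_{i-1}$, so $\{v_i\}\cup(C_{i-1}\setminus\{w\})$ is indeed a $(k)$-clique for every $w\in C_{i-1}$; the subtlety is purely about positions in $L$, namely that among the $k$ candidates at least one both loses a strict endpoint and gains nothing outside the old interval. If Bob goes inside, $v_i$ is strictly interior, so \emph{any} choice of $w \in \{\ell(C_{i-1}), r(C_{i-1})\}$ works and the interval only shrinks; the argument is robust. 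A secondary concern is bookkeeping in the inductive step so that the $d+1$ forced ``outside'' moves are genuinely distinct events rather than the same move recounted — handled by having Alice treat ``force one outside, then restart the procedure on a covered clique'' as one unit and iterate it $d+1$ times, each unit consuming at least one new round. I expect no new difficulty beyond these positional bookkeeping points, since finiteness of $L$ does all the heavy lifting.
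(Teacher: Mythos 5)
There is a genuine gap, and it is at the heart of your argument: the claim that ``since the linear order $L$ is finite, this strictly-decreasing process cannot continue forever.'' The order $L$ is not a fixed finite object --- it grows by one vertex every round, and when Bob goes inside he inserts a \emph{new} vertex strictly between $\ell(C_{i-1})$ and $r(C_{i-1})$. Your nested intervals $[\ell(C_i),r(C_i)]$ therefore never bottom out: Bob can, for instance, always place the new vertex immediately to the right of the current leftmost clique vertex, so the number of previously existing vertices inside the interval never decreases and the ``go inside'' replies can continue indefinitely. There is no well-founded measure coming from positions alone, so your base case $d=1$ does not terminate. What actually forces Bob's hand (and what the paper uses) is the game-theoretic threat: recall that ``Alice can force'' means she wins unless Bob complies. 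The paper's strategy makes Bob pay for going inside by producing, after \emph{two} consecutive inside moves (first replacing $\ell(C)$, then replacing $r(C)$), a $k$-clique $C''$ that is \emph{strictly} covered by the edge $\ell(C)r(C)$; iterating this yields strictly nested edges $\ell(C)r(C),\ \ell(C'')r(C''),\ \ldots$, and after $k+1$ levels these form a $(k+1)$-rainbow, i.e.\ Bob has already lost. So either Bob eventually declines to go inside the current clique for $d$ stackings in a row, or he loses --- either way Alice's goal is met. Note also that your one-sided shrinking (always dropping an extreme vertex) would not even feed into such a rainbow argument, since edges sharing an endpoint are not nested under the paper's definition; the two-step alternation is needed to get strict nesting on both sides.

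A secondary problem is your induction on $d$. The lemma (and its later use, e.g.\ to place $2k+1$ vertices adjacent to one clique in Lemma~\ref{lem:outside-right}) requires Bob to go outside the \emph{same} clique $C^*$ at least $d$ times. Your inductive step produces $d+1$ ``outside'' events that may be attached to different cliques, which is a strictly weaker statement. The cleaner structure (as in the paper) is: Alice stacks repeatedly on the current clique until either Bob has gone outside it $d$ times (done, and it is covered by $\ell(C)r(C)$ because all cliques in the sequence are) or Bob goes inside, in which case Alice switches to the smaller clique; the rainbow escalation bounds the number of switches, so no induction on $d$ is needed.
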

\begin{proof}
 We describe a strategy for Alice to enforce the claimed behavior of Bob.
 First, Alice starts to stack on the clique $C$ in her moves.
 If Bob does not go inside $C$ for $d$ rounds, then $C$ fulfills the desired requirements.
 
 So suppose that Bob goes inside $C$ with the vertex $v_C$ so that $\ell(C)<v_C<r(C)$ in $L$.
 Note that the vertices in the set $V(C)\setminus\{\ell(C)\}\cup\{v_C\}$ form a $k$-clique $C'$.
 For the next rounds, Alice keeps on stacking on $C'$.
 Again, if Bob does not go inside $C'$ for $d$ rounds, then we are done with clique $C'$.
 So suppose that he goes inside $C'$ with the vertex $v_{C'}$.
 Then the vertices in $V(C')\setminus\{r(C)\}\cup \{v_{C'}\}$ form a $k$-clique $C''$ that is strictly covered by the edge $\ell(C)r(C)$.
 
 Now observe that if Alice applies the above strategy to $C''$ instead of $C$, and Bob keeps on avoiding to go outside $k$-cliques as before, then we will see $k$-clique being strictly covered by the edge $\ell(C'')r(C'')$ after several rounds.
 Clearly, if Alice is repeating this strategy, then we will see a rainbow of size $k+1$ in the layout unless Bob goes outside some $k$-clique being covered by $\ell(C)r(C)$ for at least $d$ times, as claimed.
\end{proof}

\begin{lemma}\label{lem:outside-right}
 Let $C$ be a $k$-clique in the game graph with vertices $v_1,\ldots,v_k$ such that $v_1<\cdots<v_k$ in the layout.
 Assume that Alice can force Bob to go to the right outside of $C$ at least $2k+1$ many times.
 Then Alice can enforce the existence of a vertex $v_{k+1}$ in the layout such that
 \begin{enumerate}
  \item\label{item:order} $v_k<v_{k+1}$ in the layout,
  \item\label{item:clique-adj} $v_{k+1}$ is adjacent to $C$, and 
  \item\label{item:outside-right} Alice can force Bob to go to the right outside of $C'$ arbitrary many times, where $C'$ denotes the $k$-clique on the vertices $v_1,v_3,\ldots,v_{k+1}$. 
 \end{enumerate}
\end{lemma}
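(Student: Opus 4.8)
The plan is to obtain the vertex $v_{k+1}$ almost for free from the hypothesis — which immediately yields (i), (ii) and the fact that $C'$ is a $k$-clique — and then to concentrate the work on (iii).

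\emph{Producing $v_{k+1}$.} Alice invokes her power to force Bob to go to the right outside of $C$. On any such move the newly inserted vertex lies to the right of $r(C)=v_k$ and, since Alice stacked on $C$, is adjacent to all of $C$; in particular it completes $C$ to a $(k+1)$-clique. Alice designates one such vertex $v_{k+1}$, but she does this economically, so that a substantial part of her $2k+1$-fold budget of right-outside-$C$ moves remains for (iii). Now (i) holds because $v_{k+1}$ was placed to the right of $v_k$, and (ii) holds by adjacency; and since $v_{k+1}$ is adjacent to $v_1,v_3,\dots,v_k$, which form a clique inside $C$, the set $C'=\{v_1,v_3,\dots,v_k,v_{k+1}\}$ is genuinely a $k$-clique, with $\ell(C')=v_1$ and $r(C')=v_{k+1}$.

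\emph{Forcing right-outside moves on $C'$.} Alice now repeatedly stacks on $C'$; each time Bob goes to the right outside of $C'$ she records one such move, and the point is that Bob can decline only boundedly often before Alice wins. When Bob declines, he inserts a vertex $u$ with $u<r(C')=v_{k+1}$ that is adjacent to all of $C'$, so $\{v_1,v_3,\dots,v_k,u\}$ is again a $k$-clique, lying strictly to the left of $v_{k+1}$. I would have Alice react by spending part of her reserved right-outside-$C$ budget to create further vertices to the right of $v_k$ and splice these, together with $u$ and $v_{k+1}$, into a chain of pairwise nested edges that straddles $v_{k+1}$; a rainbow of size $k+1$ appears once the chain has $k+1$ links, and the intended accounting is that each declination lets Alice append one more link at the cost of a bounded slice of the $2k+1$ budget (split as one move to create $v_{k+1}$ plus $2k$ moves to drive the chain). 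The reason $C'$ is formed by deleting $v_2$, rather than another vertex of $C$, is precisely to keep $v_2$ free as an interior vertex between $\ell(C')$ and the next vertex of $C'$; when this lemma is iterated such interior vertices pile up, and they are exactly the ones Alice threads into the nested chain. Granting the bounded-declination claim, (iii) is immediate: Alice stacks on $C'$ forever, Bob declines only finitely often, hence he is forced to the right outside of $C'$ arbitrarily many times (or loses first).

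\emph{Main obstacle.} The crux is this last point: organizing Bob's declinations and the reserve of right-outside-$C$ moves into a \emph{monotone} family of nested edges — specifying exactly which vertex Alice discards at each step and verifying that the nested chain neither collapses nor stalls — and checking that $2k+1$ right-outside-$C$ moves really suffice to carry this through to a rainbow of size $k+1$. Everything else (extracting $v_{k+1}$, verifying (i), (ii) and that $C'$ is a $k$-clique, and reducing (iii) to the bounded-declination statement) I expect to be routine.
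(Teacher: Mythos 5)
There is a genuine gap: everything you label ``routine'' is indeed routine, but part (iii) --- which is the entire content of the lemma --- is never proved. Your plan reduces (iii) to a ``bounded-declination'' claim (each time Bob refuses to go right outside $C'$, Alice can extend a nested chain straddling $v_{k+1}$ using a reserved slice of the $2k+1$ right-outside-$C$ moves), and you explicitly defer exactly this claim as the ``main obstacle.'' As sketched it is doubtful it can be carried through: Bob, not Alice, chooses where the forced right-outside-$C$ vertices land (anywhere right of $v_k$) and where a declined vertex $u<v_{k+1}$ lands (anywhere: left of $v_1$, between two $v_i$'s, or among earlier right-outside vertices), so ``appending one more link'' to a monotone nested chain per declination is not something Alice can guarantee, and no accounting is given showing that $2k+1$ moves suffice. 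The side remark that dropping $v_2$ is needed so that interior vertices can be ``threaded into the chain'' is also off: within this lemma the choice of which vertex of $C$ is discarded plays no role (it only matters later, when the lemma is iterated to build the winning configuration).

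The paper's proof avoids all of this with one idea you are missing: spend the whole budget up front. Alice forces all $2k+1$ vertices adjacent to $C$ to the right of $v_k$, labels them by position as $p_k<\cdots<p_1<v_{k+1}<q_k<\cdots<q_1$, and takes the \emph{median} one as $v_{k+1}$. Then the edges $e_i=p_iv_i$ and $e_i'=q_iv_i$ form two pre-installed $k$-rainbows flanking $v_{k+1}$, and if Bob ever places a vertex $v_{C'}$ adjacent to $C'$ anywhere left of $v_{k+1}$, choosing $j$ minimal with $v_{C'}<v_j$ makes $e_1',\ldots,e_{j-1}',\,v_{C'}v_{k+1},\,e_j,\ldots,e_k$ a rainbow of size $k+1$ (only the adjacency $v_{C'}v_{k+1}$ is used). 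So a \emph{single} declination is immediately fatal, which gives (iii) outright --- no reserve, no incremental chain, no bounded-declination bookkeeping. To repair your write-up you would essentially have to replace your third paragraph by this construction.
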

\begin{proof}
 By assumption, Alice can force Bob to place $2k+1$ vertices $p_1,\ldots,p_k,v_{k+1},q_1,\ldots,q_k$, which are adjacent to $C$, to the right of $C$ in the layout.
 Let us suppose that
 \[
  p_k<\cdots<p_1<v_{k+1}<q_k<\cdots<q_1
 \]
 in the layout.
 For each $i\in\{1,\ldots,k\}$, we let $e_i:=p_iv_i$ and $e_i':=q_iv_i$.
 Observe that the edges $e_1,\ldots,e_k$ and $e_1',\ldots,e_k'$ form rainbows of size $k$.
 
 We claim that $v_{k+1}$ fulfills the requirements of the statement.
 Clearly, $v_{k+1}$ lies to the right of $v_k$ in the layout and it is adjacent to $C$, so \ref{item:order} and \ref{item:clique-adj} hold.
 Denote the $k$-clique on the vertices $v_1,v_3,\ldots,v_{k+1}$ by $C'$.
 Even stronger than condition~\ref{item:outside-right}, we can show that whenever Alice introduces a vertex $v_{C'}$ being adjacent to $C'$, then Bob loses unless he puts $v_{C'}$ to the right of $C'$ (that is, to the right of $v_{k+1}=r(C')$).
 
 So suppose that Bob places $v_{C'}$ to the left of $v_{k+1}$.
 Then let $j\in\{1,\ldots,k+1\}$ be minimal such that $v_{C'}<v_j$ in the layout (see Figure~\ref{fig:forcing-left} illustrating an example with $k=4$ and $j=3$).
 We obtain that the edges $e_1',\ldots,e_{j-1}',v_{C'}v_{k+1},e_j,\ldots,e_k$ form a rainbow of size $k+1$ in the layout (in Figure~\ref{fig:forcing-left} this is the rainbow consisting of red edges), implying that Bob lost the game.
 Therefore, condition~\ref{item:outside-right} holds.
\end{proof}

\begin{figure}[t]
 \centering
 \includegraphics{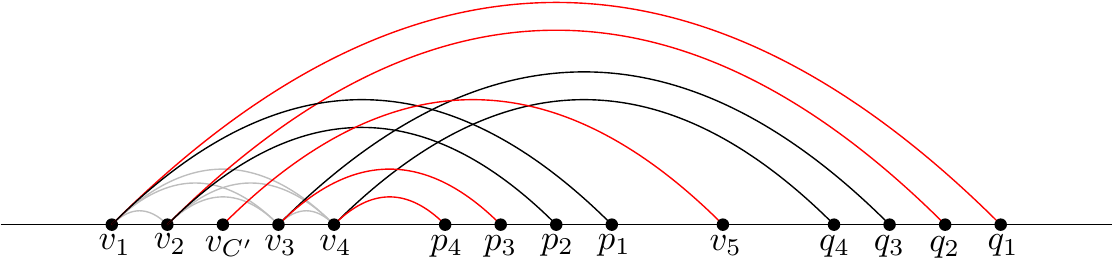}
 \caption{Situation in the $4$-queue game (not all existing edges are depicted). If Bob places $v_{C'}$ to the left of $v_5$, then this creates a $5$-rainbow.}
 \label{fig:forcing-left}
\end{figure}

Later, we will show that Alice can reach a winning configuration in the $k$-queue game by using the previous lemma.
This configuration is described in the following lemma.

\begin{lemma}\label{lemma:winning-conf}
 Suppose that there are edges $e,e',e''$, and a $k$-clique $C$ in the game graph such that
 \[
  \ell(e)\leq \ell(e')<r(e')<\ell(C)<r(C)<\ell(e'')<r(e'')\leq r(e)
 \]
 in the layout built by Bob (see Figure~\ref{fig:winning-conf} for an illustration of such a situation).
 Then Alice has a strategy to win the current $k$-queue game within a finite number of rounds.
\end{lemma}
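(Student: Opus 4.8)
The plan is to convert the abstract instruction ``Alice keeps stacking on $C$'' into concrete combinatorics via Lemma~\ref{lemma:stack-outside}, and then to win by a pigeonhole argument together with two explicit rainbows. Fix $d:=4k$. First I would apply Lemma~\ref{lemma:stack-outside} to the $k$-clique $C$ with this value of $d$: unless Alice has already won, she forces Bob to go outside some $k$-clique $C^*=\{v_1<\ldots<v_k\}$ -- covered by the edge $\ell(C)r(C)$ -- at least $d$ times. Being covered by $\ell(C)r(C)$ gives $\ell(C)\le v_1$ and $v_k\le r(C)$, so the hypothesis of the lemma yields the strict separations $r(e')<v_1$ and $v_k<\ell(e'')$, and together with $\ell(e)\le\ell(e')$ and $r(e'')\le r(e)$ also $\ell(e)<v_1$ and $v_k<r(e)$; in words, $C^*$ sits strictly inside $e$ and strictly between $e'$ and $e''$. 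Each of the $\ge d$ vertices Bob was forced to place outside $C^*$ is adjacent to all of $C^*$ and lies either to the left of $v_1$ or to the right of $v_k$, so by pigeonhole at least $\lceil d/2\rceil=2k$ of them lie on a single side; since the whole configuration is invariant under reversing the order (with $e'\leftrightarrow e''$ and $\ell(e)\leftrightarrow r(e)$), I may assume that side is the right, i.e.\ at least $2k$ of these vertices lie to the right of $v_k$.

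Now I would split those $2k$ vertices according to whether they land in the ``window'' $(v_k,r(e))$ or escape past $r(e)$. If at least $k$ of them, say $x_1<\ldots<x_k$, lie in $(v_k,r(e))$, then the edges $e$ and $v_1x_k,\, v_2x_{k-1},\, \ldots,\, v_kx_1$ (which exist because each $x_j$ is adjacent to all of $C^*$) satisfy $\ell(e)<v_1<\ldots<v_k<x_1<\ldots<x_k<r(e)$ and therefore form a rainbow of size $k+1$. Otherwise at most $k-1$ lie in the window, so at least $2k-(k-1)=k+1$ of them lie to the right of $r(e)$; choosing $k$ of those, $R_1<\ldots<R_k$, the edges $v_1R_k,\, v_2R_{k-1},\, \ldots,\, v_kR_1$ and $e''$ satisfy $v_1<\ldots<v_k<\ell(e'')<r(e'')<R_1<\ldots<R_k$ (here $r(e'')\le r(e)<R_1$ is used) and again form a rainbow of size $k+1$. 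In either subcase Alice wins, and everything has happened inside her single application of Lemma~\ref{lemma:stack-outside} with the parameter $d=4k$, hence within finitely many rounds. In the symmetric left-hand case the same two rainbows occur, with $e$ resp.\ $e'$ as the outermost edge, built from the vertices lying to the left of $v_1$ (in the window $(\ell(e),v_1)$ resp.\ to the left of $\ell(e)$).

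The essential -- and really the only nontrivial -- ingredient is Lemma~\ref{lemma:stack-outside}: it is what lets Alice replace ``stack on $C$ forever'' by the finite, usable assertion that some small clique $C^*$ has been battered with $\ge d$ outside-placements, and the one point to handle with care is verifying that $C^*$ still enjoys the strict separations $r(e')<v_1<\ldots<v_k<\ell(e'')$ together with the weak ones $\ell(e)\le\ell(e')$, $r(e'')\le r(e)$, since all four explicit rainbows rely on exactly these inequalities (which is precisely why the statement is phrased with non-strict inequalities only at the two extreme ends). Once $C^*$ is in hand the rest is a two-step case analysis -- pigeonhole to fix a side, then ``lands in the window carved out by an endpoint of $e$'' versus ``escapes past that endpoint'', the former finished using $e$ and the latter using $e''$ (or $e'$) -- with no further strategic input; in particular Lemma~\ref{lem:outside-right} is not needed for this lemma.
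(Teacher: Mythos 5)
Your proof is correct and follows essentially the same route as the paper: apply Lemma~\ref{lemma:stack-outside} to replace $C$ by a clique $C^*$ that is still strictly sandwiched between $e'$ and $e''$ inside $e$, pigeonhole the forced outside placements onto one side, and finish with two explicit $(k+1)$-rainbows, using $e$ in the ``window'' subcase and $e''$ (resp.\ $e'$) in the ``escape'' subcase---the paper does exactly this, with $2k-1$ vertices on one side instead of your $2k$ and with the left side treated explicitly. The only blemish is the closing remark calling $e'$ the \emph{outermost} edge of its rainbow (it is the innermost, just as $e''$ is in your right-hand case), which does not affect correctness.
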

\begin{proof}
 Given the configuration of the statement, we describe a strategy for Alice to win the game.
 Alice starts by applying the strategy of Lemma~\ref{lemma:stack-outside} to enforce a $k$-clique $C'$ being covered by the edge $\ell(C)r(C)$, such that Bob is forced to go outside $C'$.
 Note that $C'$ and the edges $e,e',e''$ also build a configuration as described in the statement of the lemma.
 So we may assume that $C$ is already the clique on which Bob is forced to go outside.
 In the following, let $v_1,\ldots,v_k$ be the vertices of $C$ such that $v_1<\cdots<v_k$ in the layout.
 
 Next, Alice keeps on stacking on $C$ until there are $2k-1$ vertices adjacent to $C$ that all lie to the left of $C$, or that all lie to the right of $C$ (as Bob has to go outside $C$, this happens after at most $4k-3$ rounds).
 By symmetry, we may assume that these $2k-1$ vertices lie left of $\ell(C)$.
 Using the pigeonhole principle we obtain that either there are $k$ such vertices lying to the left of $\ell(e)$, or $k$ such vertices lying between $\ell(e)$ and $\ell(C)$.
 
 Let us consider the first case now.
 So we have $k$ vertices $p_1,\ldots,p_k$ adjacent to $C$ such that
 \[
  p_k<\cdots<p_1<\ell(e)\leq \ell(e')<r(e')<v_1<\cdots<v_k
 \]
 in the layout.
 Then the edges $e',p_1v_1,\ldots,p_kv_k$ form a rainbow of size $k+1$, and hence Alice wins the game.
  
 In the second case Bob has placed $k$ vertices $p_1,\ldots,p_k$ being adjacent to $C$ such that
 \[
  \ell(e)<p_k<\cdots<p_1<v_1<\cdots<v_k<r(e)
 \]
 in the layout.
 However, in this case the edges $p_1v_1,\ldots,p_kv_k,e$ form a rainbow of size $k+1$.
 
 This shows that Alice has a winning strategy once the configuration in the statement of the lemma occurs during the game. 
\end{proof}

\begin{figure}[t]
 \centering
 \includegraphics{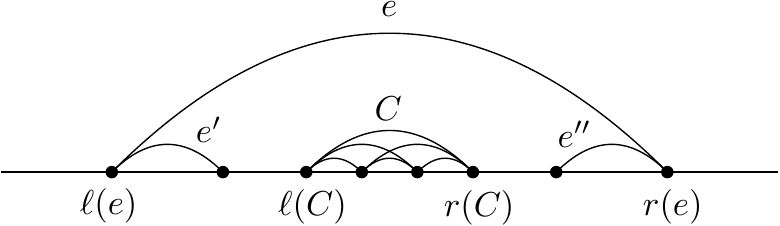}
 \caption{Winning configuration for Alice.}
 \label{fig:winning-conf}
\end{figure}

We are now ready to combine the previous lemmas to give a proof of Lemma~\ref{lem:game-bound}.
\begin{proof}[Proof of Lemma~\ref{lem:game-bound}]
 We describe a strategy for Alice to win the $k$-queue game.
 Using a $k$-clique of the initial graph in the game and the strategy of Lemma~\ref{lemma:stack-outside}, Alice can enforce a $k$-clique $C_1$ on which Bob has to go outside arbitrary many times.
 Next, Alice keeps on stacking on $C_1$ until Bob has placed $2k+1$ of the newly introduced vertices either to the left of $C_1$, or to the right of $C_1$.
 By symmetry, we may assume that the latter occurs.
 
 Observe that $C_1$ fulfills the assumptions of Lemma~\ref{lem:outside-right}.
 Starting with $C_1$, we now describe how Alice can iteratively apply the strategy of this lemma.
 Let $v_1,\ldots,v_k$ be the vertices of $C_1$ such that $v_1<\cdots<v_k$ in the layout.
 Then by Lemma~\ref{lem:outside-right} Alice can enforce a vertex $v_{k+1}$ to the right of $v_k$, such that $v_{k+1}$ is adjacent to $C_1$ and Bob is forced to go to the right outside of the $k$-clique $C_2$ consisting of the vertices $v_1,v_3,\ldots,v_{k+1}$.
 
 Clearly, Alice can now apply the strategy of Lemma~\ref{lem:outside-right} to $C_2$.
 So suppose that Alice goes on like this for another three times starting with $C_2$, and denote the three newly enforced vertices by $v_{k+2}, v_{k+3}$, and $v_{k+4}$.
 Then we have $v_1<\cdots<v_{k+4}$ in the layout, and with their introduction the new vertices became adjacent to the following vertices: vertex $v_{k+2}$ to $v_1,v_3,\ldots,v_{k+1}$, vertex $v_{k+3}$ to $v_1,v_4,\ldots,v_{k+2}$, and vertex $v_{k+4}$ to $v_1,v_5,\ldots,v_{k+3}$.
 Figure~\ref{fig:final-conf} shows this situation for $k=4$.
 
 Next we show that the resulting layout contains the winning configuration of Lemma~\ref{lemma:winning-conf}.
 To see this, let $e:=v_1v_{k+4}$, $e':=v_1v_2$, and $e'':=v_{k+3}v_{k+4}$.
 Now note that $e,e',e''$ and the $k$-clique formed by the vertices $v_3,\ldots,v_{k+2}$ build such a winning configuration.
 
 Therefore, Alice can apply the strategy of Lemma~\ref{lemma:winning-conf} and wins the $k$-queue game.
 By the arguments used for the proofs of Lemmas~\ref{lemma:stack-outside}-\ref{lemma:winning-conf}, it is also clear that Alice can exploit her winning strategy within a number of rounds that only depends on $k$.
 This completes the proof.
\end{proof}

\begin{figure}[t]
 \centering
 \includegraphics{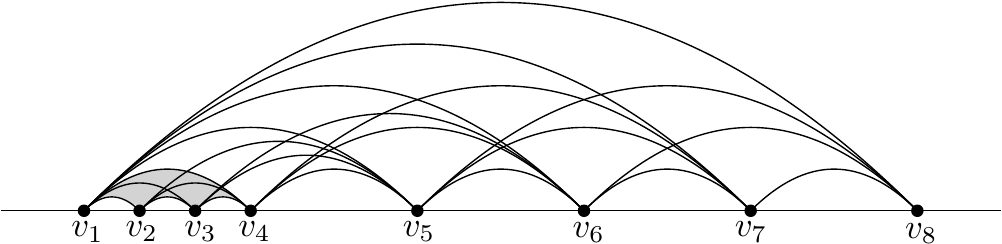}
 \caption{Situation in the $4$-queue game after applying Lemma~\ref{lem:outside-right} four times (starting with the $4$-clique on $v_1,v_2,v_3,v_4$). It contains the winning configuration of Figure~\ref{fig:winning-conf} that is formed by the edges $v_1v_8,v_1v_2,v_7v_8$ and the clique on $v_3,v_4,v_5,v_6$.}
 \label{fig:final-conf}
\end{figure}

\section{Open Problems}\label{sec:problems}
In this paper we showed a single exponential upper bound on the queue-number of graphs with tree-width at most $k$.
It remains open whether this bound can be reduced to a bound that is polynomial in $k$.
Regarding our theorem on the lower bound, it seems unlikely that $k+1$ is the right answer for the maximal queue-number of $k$-trees.
A quadratic lower bound would already be an exciting improvement.

As mentioned in the introduction, it remains open whether planar graphs have bounded queue-number.
The current best upper bound of $\mathcal{O}(\log n)$ is due to Dujmovic~\cite{Duj15}.
From below we showed the existence of planar graphs with queue-number at least $3$.
This a surprising large gap for such a popular class of graphs.

Concerning the track-number, the analogue upper bound problems are unsolved as well.

\section{Acknowledgment}
We are very grateful to Piotr Micek for the many discussions on the topic.

\bibliographystyle{plain}
\bibliography{queuenumber}

\end{document}